\patchcmd\Gread@eps{\@inputcheck#1 }{\@inputcheck"#1"\relax}{}{}
\newcommand{\ts}{\texttt{S}}
\newcommand{\twt}{\texttt{WT}}
\newcommand{\tw}{\texttt{W}}
\newcommand{\td}{\texttt{D}}
\renewcommand{\tt}{\texttt{T}}
\newcommand{\tc}{\texttt{C}}
\newcommand{\tb}{\texttt{B}}
\newcommand{\T}{\mathcal{T}}
\newcommand{\G}{\mathcal{G}}
\newcommand{\V}{\mathcal{V}}
\newcommand{\E}{\mathcal{E}}
\newcommand{\mfd}{\texttt{MajorFlowDetector}{}}
\newcommand{\Gu}{{G_{\cup}}}
\newcommand{\Vu}{{V_{\cup}}}
\newcommand{\Tu}{{T_{\cup}}}
\theoremstyle{definition}
\newcommand{\rev}[1]{{\color{black}{#1}}}
\newcommand{\et}{{\it et al.}}
\begin{document}
\title[Serial Scammers and Attack of the Clones]{Serial Scammers and Attack of the Clones: How Scammers Coordinate Multiple Rug Pulls on Decentralized Exchanges}
%\titlerunning{Serial Scammers and Attack of the Clones}

% \author{First Author\inst{1}\orcidID{0000-1111-2222-3333} \and

% \author{
% Phuong Duy Huynh\inst{1}\and
% Nick Huppert\inst{1} \and
% Son Hoang Dau\inst{1} \and
% Xiaodong Li\inst{1} \and
% Emanuele Viterbo\inst{2}
% }
% %
%\authorrunning{P. D. Huynh et al.}
%\institute{RMIT University, Melbourne VIC 3000, AU \and Monash University, Clayton VIC 3800, AU
%}
%

%don't want date printed
% \date{}

\author[P. D. Huynh]{Phuong Duy Huynh}
\affiliation{%
  \institution{RMIT University}
  \country{}}
\email{phuong.duy.huynh@rmit.edu.au}

\author[S. H. Dau]{Son Hoang Dau}
\affiliation{
  \institution{RMIT University}
  \country{}}
\email{sonhoang.dau@rmit.edu.au}

\author[N. Huppert]{Nicholas Huppert}
\affiliation{
  \institution{RMIT University}
  \country{}}
\email{nicholas.huppert@rmit.edu.au}

\author[J. Cervenjak]{Joshua Cervenjak}
\affiliation{
  \institution{RMIT University}
  \country{}}
\email{joshua.cervenjak@rmit.edu.au}

\author[H. Sun]{Hoonie Sun}
\affiliation{%
  \institution{RMIT University}
  \country{}}
\email{hoonie.sun@rmit.edu.au}

\author[H. Y. Tran]{Hong Yen Tran}
\affiliation{
  \institution{UNSW Canberra}
  \country{}}
\email{hongyen.tran@unsw.edu.au}

\author[X. Li]{Xiaodong Li}
\affiliation{
  \institution{RMIT University}
  \country{}}
\email{xiaodong.li@rmit.edu.au}

\author[E. Viterbo]{Emanuele Viterbo}
\affiliation{
  \institution{Monash University}
  \country{}}
\email{emanuele.viterbo@monash.edu}

%\vspace{20pt}
\begin{abstract}
We explored the ubiquitous phenomenon of \textit{serial scammers}, each of whom deployed dozens to thousands of addresses to conduct a series of \textit{similar} Rug Pulls on popular decentralized exchanges.
We first constructed two datasets of around 384,000 scammer addresses behind all one-day Simple Rug Pulls on Uniswap (Ethereum) and Pancakeswap (BSC), and identified distinctive \textit{scam patterns} including \textit{star}, \textit{chain}, and \textit{major (scam-funding) flow}. 
These patterns, which collectively 
cover about $40\%$ of all scammer addresses in our datasets, reveal typical ways scammers run multiple Rug Pulls and organize the money flow among different addresses.
We then studied the more general concept of \textit{scam cluster}, which comprises scammer addresses linked together via direct ETH/BNB transfers or behind the same scam pools. 
We found that scam token contracts are highly similar within each cluster (average similarities $>70\%$) and dissimilar across different clusters (average similarities $<30\%$), corroborating our view that each cluster belongs to the same scammer/scam organization. 
Lastly, we analyze the scam profit of individual scam pools and clusters, employing a novel \textit{cluster-aware profit formula} that takes into account the important role of \textit{wash traders}.
The analysis shows that the existing formula inflates the profit by at least 32\% on Uniswap and 24\% on Pancakeswap.
\end{abstract}

\maketitle 

\vspace{-10pt}
\section{Introduction}
\label{sec:intro}

The total crypto scam revenue from 2019 to 2023 reached a staggering amount of nearly US \$40 billion, according to the latest Crypto Crime Report by the leading blockchain analytics firm Chanalysis~\cite[p.~104]{chainalysis_report_2024}. The report also shows that \textit{Rug Pull}, a common type of scam in the decentralized finance (DeFi) ecosystem, was among the top three fastest growing scams in 2023~\cite[p.~105]{chainalysis_report_2024}. Rug Pull, first reported in 2021~\cite{Xia_etal_2021, chainalysis_report_2022}, refers to scams in which the developer(s) of a cryptocurrency project (usually a new token) suddenly vanished with investors' fund, leaving their purchased assets worthless.
Rug-Pull scams were responsible for the loss of more than US \$100 million in 2023 alone according to Immunefi's Crypto Loss Report~\cite{immunefi_hack_fraud_reports}, and are still costing millions of dollars every month in 2024.
%\footnote{Immunefi's report considers only Rug Pulls for its ``fraud'' category.}. 
Immunefi's reports~\cite{immunefi_hack_fraud_reports} also identified Etherem and BSC as the two most targeted chains by hacks and Rug Pulls in 2023-2024.
%Immunefi's report considers only Rug Pulls for its ``fraud'' category.
%Victims are still reporting Rug-Pull scams to chainabuse.com once every few days~\cite{chainabuse_rugpull}, with alleged Rug Pulls such as \texttt{Babylon}\footnote{0x2a5d3d6e5f61d747e710bf5b0be94c0d77bed774} (reported on 30/8/2024) just having its latest liquidity-removal transaction right at the time of writing (7/9/2024)\footnote{https://etherscan.io/tx/0xd5d23202267c69cdcf1be7b38098431eeae0d4fe5169e3266009c0255e83c3d6}. Two hours after completely draining \texttt{Babylon}'s liquidity (rug pulling), its creator transferred all of its balance (185 ETH) worth nearly half a million dollars to another address\footnote{https://etherscan.io/tx/0x3775934a2d74e1a90705cc44befc1b8d6bffe80e5481248f487de7b88acb78d4}. One can easily verify that \texttt{Babylon} is a scam token on UniswapV2 using well-known token-verifier tools such as Tokensniffer\footnote{https://tokensniffer.com/token/eth/fne912de023ukffea1mz8k62qsqdaq6xpejz4f1jxljahsbx2qqj8a52qqgs} and Apespace\footnote{https://apespace.io/eth/0x2a5d3d6e5f61d747e710bf5b0be94c0d77bed774}. Thus, Rug-Pull scams are still happening as we speak.

% \begin{figure}[htb!]
% \centering
% \includegraphics[scale=0.55]{}
% \caption{While existing works treat individual Rug-Pull scams and scammer addresses as independent identities (left), we connect them and study the broader picture (right).}
% \label{fig:our_approach_existing_approaches}
% \end{figure}

In this work, we investigate how \textit{serial} Rug-Pull scammers operate on UniswapV2 (UNI) and PancakeswapV2 (CAKE), the two most popular decentralized exchanges (DEXs) on Ethereum and BSC. Serial scammers, which could be individual scammers or scam organizations, are those that perform a series of similar scams via multiple scam addresses. The existence of such serial scammers have been sporadically discussed in the literature of Rug Pulls albeit under simplistic definitions.
More specifically, it was reported in the earlier work of Xia~\et~\cite[Sect.~4.4]{Xia_etal_2021} and also in the recent work of Cernera~\et~\cite{Cernera_etal_USENIX2023} that there were addresses that were behind multiple Rug-Pull scams on Uniswap and BSC. This is the simplest form of serial scammers, in which \textit{one} address created multiple scam tokens or pools. It was assumed in~\cite{Cernera_etal_USENIX2023} that scammer addresses are independent, and the Rug Pulls carried out by them are unrelated, leaving the case of single scammers coordinating multiple scams and scam addresses for future research (see~\cite[Sect.~11]{Cernera_etal_USENIX2023}). Xia~\et~\cite{Xia_etal_2021} also studied a related concept of colluding addresses that are behind the same scam pool, which is not the same as serial scammer as the focus is on individual scam pools.

Similar to~\cite{Xia_etal_2021} and \cite{Cernera_etal_USENIX2023}, other existing works on Rug Pulls~\cite{Mazorra_etal_2022, Huynh_etal_arxiv_2023, Nguyen_etal_2023, Zhou_etal_ICSE_SEIP_2024} for ERC-20/BEP-20 tokens on Ethereum and BSC only investigated snapshots of the Rug-Pull scam landscape by \textit{zooming in} to either individual scammer addresses or individual scam tokens/pools and treating them as independent entities. 
%\footnote{\texttt{Babylon}'s creator 0x05189e4e3c482c6faad351ed05219c0c7c2ef6a4 also created many other scam tokens such as \texttt{Linea\_Network} and \texttt{Scroll\_Network}.}. 

\begin{figure}[tb]
    \centering
    \includegraphics[width=0.45\textwidth]{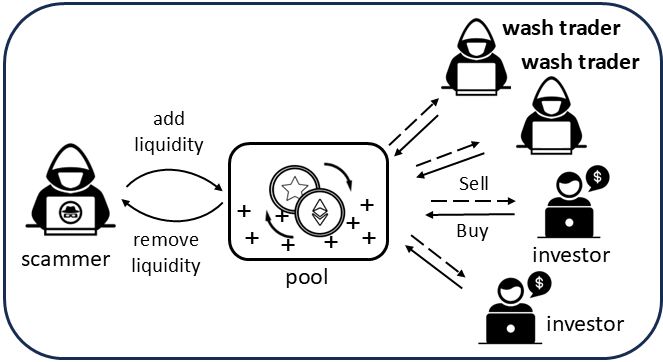}
    \caption{\textmd{Typical activities in a DEX scam pool. There can be one or more \textit{scammer} addresses behind each pool. \textit{Wash traders} bought scam tokens to increase its price and generate fake activities. The arrow refers to the flow of the native token (ETH/BNB). If the scam token is a Trapdoor~\cite{Huynh_etal_arxiv_2023} then it's possible to buy but impossible to sell the scam token to obtain the high-value native token.}} % (hence the dashed arrows).}}
    \label{fig:typical_pool}
    \vspace{-10pt}
\end{figure}

Unlike previous approaches, we \textit{zoomed out} 
to reconstruct a more comprehensive %and accurate 
picture of how serial Rug-Pull scammers organized their operations. 
To facilitate the exposition of serial scammers, we restricted our investigation to \textit{one-day Simple Rug Pull} tokens, which are easy to identify and prevalent on Ethereum and BSC. Such tokens lived for only one day and were paired with a high-value token such as ETH or BNB in a pool, the liquidity of which was provided and removed by a scammer address within a day (see~\cite{Cernera_etal_USENIX2023}). 
As shown in~\cite{Cernera_etal_USENIX2023}, one-day Simple Rug-Pull tokens were abundant on Ethereum and BSC, accounting for nearly $50\%$ of all tokens. %\footnote{It was reported in~\cite{Cernera_etal_USENIX2023} that $\approx 60\%$ of tokens on Ethereum and BSC Smart Chain have lifetime shorter than a day, and more than 80\% among which were Rug Pulls at their time of study. This is also consistent with our findings.} on these chains from inception to March 2022. 
Note that one-day Simple Rug Pulls are much easier to detect with higher confidence compared to longer-life Rug Pulls, which were usually labelled by less straightforward rules. For example, Mazorra~\et~\cite{Mazorra_etal_2022} labelled as Rug Pulls inactive tokens (no activities for more than a month) where the corresponding pools had their liquidity completely removed or had their price dropped more than 90\% and never recovered. Longer-life Rug-Pull tokens could be potentially mixed up with low-performing tokens where the creator %decided to 
removed the liquidity after a long period of no profit without any ill intent.

\textit{Wash trading} is an integral part of Rug Pull scams on DEXs (see Fig.~\ref{fig:typical_pool}), in which fake investors - addresses that are likely owned by the scammers - buy the scam tokens to increase their price, making them look more promising (see, e.g. Xia~\et~\cite[Sec.~5.3.3]{Xia_etal_2021}). 
%In our work, 
By first constructing the \textit{scam clusters}, %and scam-supporting addresses of serial scammers, 
we were able to identify all scammer addresses that %perform \textit{wash tradings} for 
wash traded their own scam pools as well as other pools within the same cluster. 
In other words, these addresses play a double role of scammer and wash trader simultaneously, not only conducting their own Rug Pulls but also buying scam tokens created by other addresses within the same cluster %to pump up their prices in order 
to attract the real investors/victims. %and Fig.~\ref{fig:typical_transactions}). 
Using the scam clusters allows us to estimate more accurately the \textit{real profits} of the scammers.
As an example, the creator of the Uniswap pool that pairs ETH and a scam token called \texttt{PUMPKIN} added 1.8 ETH to and removed 9.27 ETH from the pool, seemingly reaped a profit of more than 7 ETH (>\$10,000) after mere 37 minutes, creating the illusion of a highly successful scam. However, as discussed in Section~\ref{sec:cluster_aware_scam_profit}, it turns out that all major investors were wash traders, and the \textit{real profit} for the creator of \texttt{PUMPKIN}, %according to our new formula, with 
after deducting the wash-trading expense, is almost zero. %only 0.114 ETH. 

Our contributions are summarized below.%\vspace{-5pt}
\begin{itemize}
    \item We constructed two one-day Simple Rug-Pull datasets on Uniswap~\cite{uniswap} and Pancakeswap~\cite{pancakeswap} V2, containing nearly \rev{632,000} scam pools and \rev{384,000} scammer addresses. % behind them.
    \item We formally defined and identified typical \textit{scam patterns} in our datasets, including \textit{stars}, \textit{chains}, and \textit{major flows}, \rev{which cover about $40\%$ of all scammer addresses}. The longest scam chain and the largest scam star consist of \rev{713} and \rev{585} scammer addresses, respectively. 
    \rev{The largest major flow has 820 scammer addresses.}
    %, who scammed repeatedly for almost a month! 
    Such patterns revealed distinctive ways serial scammers coordinated multiple scams on DEXs.
    %\item We further grouped the scammer addresses that interact with each other via a direct ETH/BNB transfer or via a scam pool into scam clusters, and found that token contracts used within each cluster are mostly similar, which indicates that they could be clones from the same source (same scammer). 
    \item We formalized the concept of \textit{scam clusters}, which consist of scammer addresses that interact with each other via direct ETH/BNB transfers or via scam pool activities. %We identified 91,458 and 93,420 clusters, with the largest ones consisting of 39,741 and 127,208 scammer addresses on Uniswap and Pancakeswap, respectively. 
    We identified \rev{8,467} and \rev{18,042} clusters, with the largest ones consisting of \rev{4,155} and \rev{23,399} scammer addresses on Uniswap and Pancakeswap, respectively.
    We also found that token contracts used within each cluster are mostly similar, %(average intra-cluster similarity scores above $70\%$, as opposed to inter-cluster similarity scores below $30\%$), which indicates 
    indicating that they could be clones from the same scammer/organization.
    \item We proposed a \textit{cluster-aware} scam profit formula that captures more accurately the scam profit of Rug Pulls on DEXs by taking into account the wash-trading transactions coming from the cluster addresses. Our analysis %, based on the new formula, 
    shows that the existing formula (ignoring wash-trading) has inflated the true scam profit by at least $24\%$ (CAKE) and 32\% (UNI).
    %\item Lastly, we performed a preliminary study on \textit{scam networks}, which further extends \textit{scam clusters} to include \textit{supporting} addresses that are not directly behind any Rug Pulls, but operating as coordinators, wash traders, or transferrers, etc. %withdrawers, or depositors.
\end{itemize}

% \begin{figure}
%     \centering
%     \includegraphics[scale=1]{}
%     \caption{\textmd{Typical transactions within a \textit{scam network}. A \textit{coordinator} address often funds several scammer addresses and sometimes \textit{wash-trader} addresses. It also sometimes plays the role of a \textit{depositor} and transfers large amount of fund to CEX/mixer. Wash traders can be funded by any node in the network. Sometimes they obtain fund directly from a CEX/mixer, playing the role of a \textit{withdrawer}.}}
%     \label{fig:typical_transactions}
% \end{figure}

The paper is organized as follows. Section~\ref{sec:background} provides the necessary background. Section~\ref{sec:data_collection} describes how the Rug-Pull datasets on UniswapV2 and PancakeswapV2 were constructed. Section~\ref{sec:pattern_detection} is devoted to scam pattern detection and analysis. We study the scam clusters and cluster-aware scam profit in Sections~\ref{sec:clustering} and~\ref{sec:cluster_aware_scam_profit}. %A preliminary investigation of scam networks is given in Section~\ref{sec:scam_network}. 
%We discuss related works in Section~\ref{sec:related_works}, and 
The paper is concluded in Section~\ref{sec:conclusion}.

\section{Background}
\label{sec:background}

\subsection{Ethereum and BNB Smart Chain (BSC)}
\textbf{Ethereum} is the second-most popular blockchain after Bitcoin, with a market capitalization of over US\$300 billion at the time of this study~\cite{ethereum_cap}. 
%At the time of its launch in 2015, Ethereum had drawn great attention to the blockchain community by introducing Ethereum Virtual Machine (EVM) and a smart contract concept, becoming a pioneer in contract-supporting blockchains. 
Representing the second generation of blockchain, it provides EVM, a Turing-complete virtual machine, and natively supports smart-contract-based applications.
Smart contracts are executable pieces of code holding business logic~\cite{Szabo_1997}. Since the integration of smart contracts into blockchain by Ethereum in 2015, the blockchain technology has rapidly progressed and been widely applied to many domains, including supply chain, health care, and governance~\cite{jaiman2020consent,casado2018blockchain,novo2018blockchain}. 
\textbf{BNB Smart Chain (BSC)} was a hard fork from Ethereum in 2020 with a different consensus mechanism to achieve faster transaction processing times with lower fees.
%with new features to boost the performance. One of the important updates is the use of a different consensus, allowing faster transaction processing times with lower fees. 
%BSC supports a similar token standard called BEP-20. 
%Similar to Ethereum, BSC is also a smart-contract-supporting blockchain with a very similar token standard (BEP-20). 

% \red{This paragraph doesn't read well and needs to be revised.}
% One of the well-known smart-contract-based applications is fungible tokens.
% \textbf{Fungible tokens} are the financial applications widely used as digital assets such as company shares, online game assets, or fiat currencies \red{incorrect? fiat currencies are not digital}. 
\textbf{Fungible tokens} are interchangeable digital assets governed by smart contracts.
Fungible tokens on Ethereum and BSC must follow the technical standards ERC-20 and BEP-20~\cite{erc20_standard,bep20_standard} by implementing a set of functions and events (see Table~\ref{tab:token_standard} in Appendix~\ref{app:functions_events}). 
%\textbf{Accounts} are the basic unit in blockchain that is represented by a 20-byte length unique address. Ethereum and BSC accounts are classified into externally owned account (EOA) and contract account (CA). 
\textbf{An account} is a basic unit in blockchains and represented by a unique \textit{address}, which has the prefix ``0x'' followed by a sequence of 40 hexadecimal characters. Ethereum and BSC accounts are classified into %two different types: an 
externally owned account (EOA) and contract account (CA). 
The former is controlled by users via a private-public key pair %mapped to the address, 
while the latter is managed by a contract that contains executable code.
\textbf{A transaction} in blockchains is a message between two accounts. Transactions record all activities on blockchains, such as deploying a new smart contract or transferring a digital asset. A fee will be charged to the user when creating a transaction to compensate the miner who processes it. %cover the computational expenses. 
Transactions are classified based on the type of the sender account: a \textit{normal} transaction is sent from an EOA, while an \textit{internal} transaction is sent from a CA.

\subsection{Uniswap (UNI) and Pancakeswap (CAKE)}
Decentralised exchanges (DEXs) are financial platforms that operate based on price determination mechanisms such as order books and automated market makers (AMM), allowing users to exchange their digital assets without the involvement of central authorities~\cite{werner2022sok}. Uniswap~\cite{uniswap}, which debuted on Ethereum in 2018, is one of the most popular DEXs. It %Uniswap 
is the first DEX that adopted the AMM mechanism successfully with the concept of \textit{exchange pools} and \textit{liquidity providers}. 
An exchange pool in Uniswap operates like a money-exchange counter of two currencies (fungible tokens).
The liquidity in a pool is provided by one or multiple users (liquidity providers). Anytime a provider adds liquidity (in the form of two corresponding tokens) into a pool, the pool's smart contract will ``mint'' LP tokens and send them to the provider as liquidity shares. To withdraw, the provider simply sends LP tokens back to the pool, which will ``burn'' these tokens and return the funds to the provider. 
%A provider can send LP tokens to the pool to withdraw their funds anytime they need. When a pool receives LP tokens from a provider, its smart contract will ``burn" them and send the fund back to a provider. 
% The exchange rate of a pool is calculated based on the ratio of tokens available. 
% More specifically, if a user wants to exchange $a$ token A for token B, given that there are $x$ tokens A and $y$ tokens B in the pool, then the user will get $b$ tokens B, where $b$ satisfies $xy=\big(x+a(1-\gamma)\big)(y-b)$, with $\gamma$ being the fee charged from the trader and proportionally distributed to pool liquidity providers as rewards. In Uniswap and BSC, the router~\cite{uniswap_router} acts as an intermediary between users and pools, which helps a user find an appropriate pool for two given tokens and estimate the amount of output before exchanging.

Although Uniswap has launched its fourth version (February 2025), other versions are still operating as independent platforms. Among them, Uniswap version 2 (UniswapV2) entirely outperforms others in terms of the number of listed tokens and pools~\cite{coingecko_dex}. Due to the popularity of this version and its open-source smart contracts, more than 650 DEXs across different blockchains are the forks of UniswapV2~\cite{uniswap_forks}, and  PancakeswapV2~\cite{pancakeswap} is the most successful fork on BSC. As such, we chose to study UniswapV2 and PancakeswapV2, noting that our approach is also applicable to other forks of UniswapV2 and other similar DEXs.
%\textcolor{red}{\subsection{Rug Pull Scam}}

\section{One-Day Simple Rug-Pull Datasets} 
%on Uniswap and Pancakeswap}
\label{sec:data_collection}

We first define one-day Simple Rug Pull scam and then discuss how to construct the datasets of scam pools, scam tokens, and scammer addresses on UniswapV2 (Ethereum) and PancakeswapV2 (BSC).
The definition of scammer addresses were first discussed in Xia~\et~\cite[Sec.~5.3]{Xia_etal_2021}. 
See Fig.~\ref{fig:scammer_history} for an example of the transaction history of a typical Rug-Pull scammer address. 

\begin{figure}[htb]
    \centering
    \includegraphics[scale=0.25]{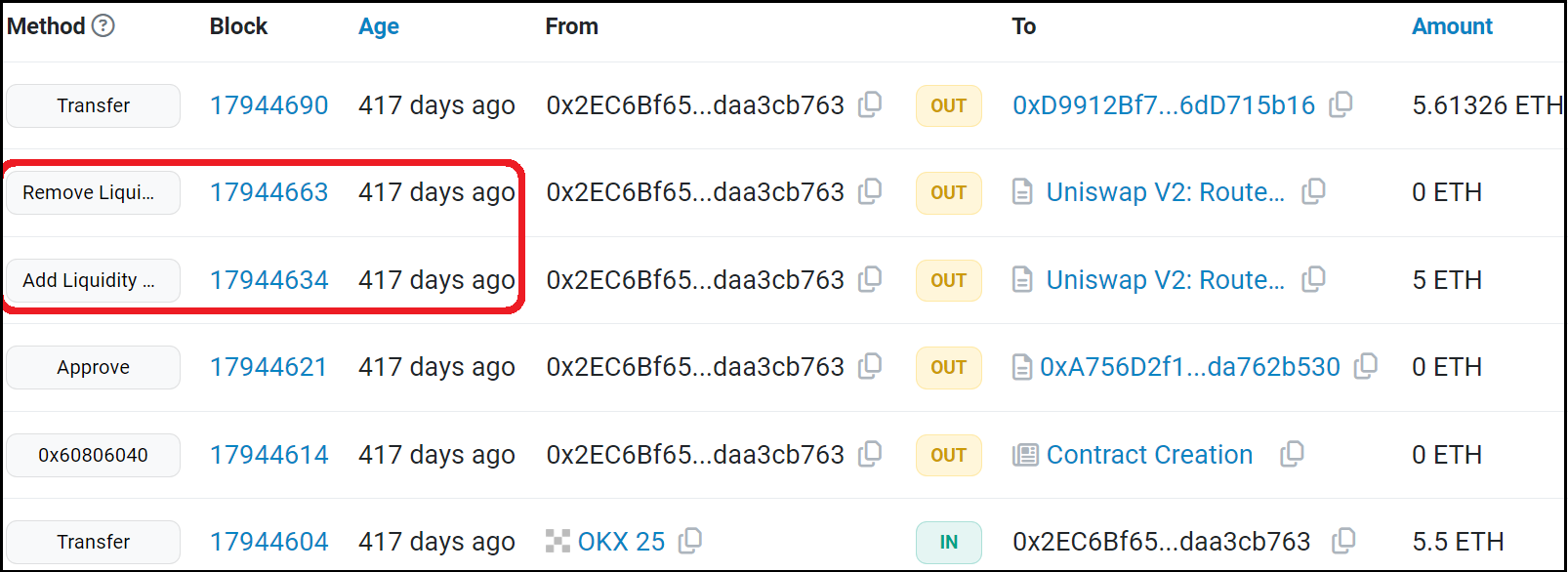}
    \caption[abc]{\textmd{The transaction history of a typical one-day Simple Rug-Pull scammer address on Etherscan, with one ``Add Liquidity'' and one ``Remove Liquidity'' events within a day. The scammer address 0x2ec6bf65bf9cf83bdd9295425b5b145daa3cb763 received 5.5 ETH from the public exchange OKX, created a pool on UniswapV2, added 5 ETH and 250M LIGHT (scam token) as liquidity, then removed liquidity (5.19 ETH and 241M LIGHT) within a day.}}% and transferred to another address.}}
    \label{fig:scammer_history}
    %\vspace{-5pt}
\end{figure}
%\footnotetext{Full address: 0x2ec6bf65bf9cf83bdd9295425b5b145daa3cb763.}

\vspace{-5pt}
\subsection{One-Day Simple Rug Pull}
\label{subsec:one_day_rp}

%Note that our definition of one-day Rug Pull follows that in Cernera {\et}~\cite{Cernera_etal_USENIX2023} but includes one more type of Rug Pull called Sell Rug Pull, in which the scammer, instead of burning the LP tokens, sells its large supply of lower-value token to swap out almost all higher-value token, making the exchange pool worthless.   
We follow the definition of one-day Rug Pull in Cernera~{\et}~\cite[Sec.~7.1]{Cernera_etal_USENIX2023}.
\rev{It is required that the scammer burned at least 99\% only instead of 100\% of liquidity tokens to account for rounding error.} 

%\vspace{-5pt}
\begin{definition}[One-Day Simple Rug Pull]
\label{def:rugpull}
    A \textit{one-day exchange pool} is a pool that pairs a higher-value (lower reserve) token and a lower-value (higher reserve) token in which the first and last events happen within a day. 
    %A \textit{one-day token} is a token that has the first and last events happen within a day.  
    % An exchange pool is called a \textit{one-day Simple Rug Pull} if a) it is one-day, b) the lower-value token is one-day, and c) has one mint event and %has either 
    % one burn event that burns at least 99\% of the minted LP tokens. 
     An exchange pool is called a \textit{one-day Simple Rug Pull} if  a) it is one-day, b) the lower-value token is paired in this pool only, and c) it has one \texttt{Mint} and %has either 
    one \texttt{Burn} event that burns at least 99\% of the minted LP tokens. 
    The corresponding lower-value token of a one-day Simple Rug-Pull pool is called a one-day Simple Rug-Pull token. We also refer to them as \textit{scam pool} and \textit{scam token}. %for short.
    % or one swap event that gets at least 99\% of the high-value token out of the pool. We refer to the Rug Pull as a \textit{Simple} Rug Pull if the former occurs and as a \textit{Sell} Rug Pull if the latter occurs. 
\end{definition}

%\vspace{-5pt}
\begin{definition}[Scammer Addresses]
    \label{def:scammer_address}
    Given a scam pool, we refer to the addresses of the scam token creator, the scam pool creator, the liquidity provider, and the liquidity remover as \textit{scammer addresses} behind/associated with the pool. 
\end{definition}

\subsection{Data Collection}
We collected data on UniswapV2 and PancakeswapV2 from the time they were launched to the time of this study (July 2024).

\textbf{Exchange Pools.}
Uniswap works based on three main contracts: \texttt{Factory}, \texttt{Pair}, and \texttt{Router}.
\texttt{Factory} generates an \textit{exchange pool} (a \texttt{Pair}) for users from two given tokens. 
\texttt{Factory} stores the addresses of created \texttt{Pairs} while a \texttt{Pair} stores the addresses of two listed tokens.
We first used Web3.py~\cite{web3} to query all created pools by calling the \texttt{allPairs} function of \texttt{Factory}. For each collected pool, we then called the functions \texttt{token0} and \texttt{token1} from its contract to retrieve the pair of listed tokens. Finally, we used Etherescan and BSCscan APIs~\cite{EtherscanAPI,BSCscanAPI} for retrieving useful information of such pools and tokens, including creator addresses and contract source codes. As a result, we collected 356,295 pools and 343,637 tokens on Uniswap, and 1,694,058 pools and 1,510,774 tokens on Pancakeswap.

\textbf{Pool Events.} Pool events are the logs of a \texttt{Pair} written down when the state of any property changes. In our study, we collected four pool events, including \texttt{Mint}, \texttt{Burn}, \texttt{Transfer}, and \texttt{Swap}. The first three events occur every time LP-token information is changed. For example, an exchange pool emits a \texttt{Mint} event each time LP-tokens are minted for a new liquidity adding. Similarly, a \texttt{Burn} event is emitted when a pool burns the LP tokens received from a liquidity \rev{remover}. A \texttt{Transfer} event is recorded any time an LP token is transferred from one address to another (ownership change). On the other hand, %Unlike these three events, 
a \texttt{Swap} event is written each time a user swaps tokens in an exchange pool.
To collect these events, we used the ``getLogs" APIs from Etherscan and BSCscan. To reduce the time and computation cost, we only downloaded the first 1000 events of each pool, as it will not impact our goal of collecting one-day scam pools/tokens (which should not have too many events). The downloaded data was then decoded to extract useful %necessary 
information using our event decoder. %In consequence, 
As the result, we collected 2.2 million \texttt{Mint} events, 1.0 million \texttt{Burn} events, 4.7 million \texttt{Transfer} events and 49.9 million \texttt{Swap} events on Uniswap (resp. 17.3 million, 5.1 million, 38.0 million, and 154.9 million events on Pancakeswap).

%\textbf{Public Addresses.} In some cases, a scammer may use a public address to interact with blockchain, such as using a public contract deployer to create their pools or using a telegram bot for wash trading their scam tokens. In such cases,
%we cannot trace back and obtain the real scammers from the these public addresses. Hence, before identifying Rug Pull scams and related scammers, we collect all public addresses on both platforms including DEXs contracts, centralised exchanges, blockchain bridges, sniper and telegram bots, contract deployers, etc. Totally, we download 32,650 public addresses on Uniswap and 22,106 addresses on Pancakeswap.

\textbf{Rug Pulls and Scammers.}
From the pool events gathered in the previous step, we identified Rug Pull scams on each DEX following %by using 
the one-day Rug Pull definition. To that end, we collected all pools that only fired one \texttt{Mint} and one \texttt{Burn} event and examined if 99\% of liquidity was burned within a day of it being added. Moreover, we only focused on ETH pools on Uniswap and BNB pools on BSC. This allows us to accurately assess the costs and profits associated with these scams. According to our analysis, 96\% %number of 
of pools on Uniswap are ETH pools and  88\% %number of 
of pools on Pancakeswap are BNB pools. Thus, the missing cases are minority %insignificant %not important for 
%in our study 
and were ignored. %, but it is difficult to estimate the cost from the token prices. 
\rev{Finally, we extracted scammer addresses, who created a scam pool or token, or provided/withdrew liquidity 
from identified scam pools (see Def.~\ref{def:scammer_address})}. %It is worth noting 
Note that we excluded all public/service addresses (e.g., CEX, DEX, bots, bridges, mixers) and old addresses that had no transactions within our data collection period. %These old addresses appear in our dataset because their tokens were created before the platforms were launched. 
In the end, %Ultimately, 
161,329 (45\%) scam pools and 145,654 unique scammers were found %determined 
on Uniswap, and \rev{470,712 (28\%)} %(27.78\%)} 
scam pools and \rev{238,280} unique scammers were found on Pancakeswap. \rev{See App.~\ref{app:data_collection} for further analyses.}

% \red{We need to briefly describe a few "typical/significant cases" of Rug Pulls (best with names), say three, which are either newly discovered by us or discussed on the internet before by well known blockchain analytics companies, to relate our study to the industry.}

\section{Serial Scam Patterns}%: Detection \& Analysis}
\label{sec:pattern_detection}

We explore in this section distinctive \textit{funding patterns} that reveal how scammer addresses, which potentially belong to the same serial scammers, receive fund to carry out their Rug-Pull scams and transfer the fund to another address in a \textit{highly coordinated} manner. %\red{For one of our funding patterns: \textit{simple} scam chain, we found the top 50 highest average transfer amounts per chain to average 902 ETH for Uniswap and 138 BNB for Pancakeswap}. 
Moreover, such patterns collectively cover significant portions of scammer addresses: 37.1\% on Uniswap and 44.3\% on Pancakeswap.
This is the first step in our investigation of serial scammers on DEXs where we deviate from most existing approaches~\cite{Xia_etal_2021, Cernera_etal_USENIX2023,Mazorra_etal_2022, Huynh_etal_arxiv_2023, Nguyen_etal_2023, Zhou_etal_ICSE_SEIP_2024}, which often treat different scams as unrelated. 
%\red{We should separate the major-flow into a new section.}

We define the following concepts to facilitate our discussion. 
Given an address A, an \textit{in-transaction} is a native-token (ETH/BNB) transfer from another address B to A, and B is called an \textit{in-neighbor} of A. Similarly, an \textit{out-transaction} is a native-token  transfer from A to another address C, referred to as an \textit{out-neighbor} of A. We refer to the buys and sells of scam tokens as \textit{swap-ins} (paying ETH/BNB to the pool) and \textit{swap-outs} (receiving ETH/BNB from the pool).

\subsection{Scam Stars}
\label{subsec:star}

We start our scam pattern exploration by defining three types of scam star, representing a commonly found pattern in which scammer addresses are coordinated by a center address. % (coordinator).

\begin{definition}[Scam Star]
    \label{def:star}
    A \textit{scam star} consists of a \textit{center} address $c$ (coordinator) and $n\geq 5$ \textit{scammer} addresses $s_1,\ldots,s_n$ (satellites) that satisfy one of the following patterns.
    %\begin{itemize}
    %\item 
    
    \textbf{OUT-star (common funder)}: %the satellites 
    $s_1,\ldots,s_n$ received at least 100\% of the cost to create their first scam from $c$ but sent no fund back to $c$. The corresponding transfer from $c$ must be the largest in-transaction each satellite received \textit{before} conducting the first scam.  
    %\item 
    
    \textbf{IN-star (common beneficiary)}: %the satellites 
    $s_1,\ldots,s_n$ received no fund from $c$, but transferred at least $p=90\%$ of their last scam revenue back to $c$. The corresponding transfer to $c$ must be the largest out-transaction from each scammer \textit{after} conducting their last scam. 
    %\item 
    
    \textbf{IN/OUT-star (common funder/beneficiary)}: the satellites 
    $s_1,\ldots,s_n$ received at least 100\% of the cost to create their first scams from $c$. The corresponding in-transaction from $c$ must be the largest in-transaction each satellite received \textit{before} conducting the first scam. Moreover, the satellites transferred at least $p=90\%$ of their last scam's revenue back to $c$. The corresponding out-transaction to the center must also be the largest out-transaction from each scammer \textit{after} conducting the last scam. We also use I/O for short. 
    %\end{itemize}
\end{definition}

%\vspace{-5pt}
\begin{figure}[tb]
    \centering
    \includegraphics[scale=1]{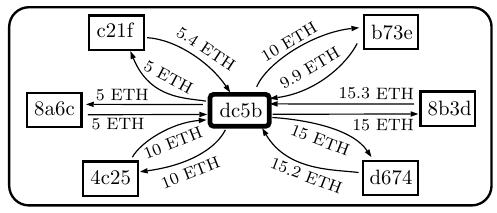}
    \caption[abc]{\textmd{Examples of an IN/OUT-star with \textbf{dc5b} as the center. %\footnotemark. 
    The six satellites are all scammer addresses. The center's full address is 0xbfc6cc4676aef7216e597d45d68463097520\textbf{dc5b}.}}
    \label{fig:stars}
    %vspace{-5pt}
\end{figure}
%\footnotetext{Its full address is 0xbfc6cc4676aef7216e597d45d68463097520dc5b.}

Examples of an IN-star and an IN/OUT-star are given in Fig.~\ref{fig:stars}. Note that while we require that the funding amount from the center must cover 100\% of the scam cost (in OUT-stars and an IN/OUT-stars), the value of the OUT transaction only needs to cover at least $p=90\%$ of the last scam's revenue. This reflects our observation that a scammer address may also operate as a wash trader, who spends some small portion of the sum received from the funder and from its scam pool to buy (i.e. \textit{wash trade}) scam tokens from scam pools created by \textit{other} scammer addresses.

\begin{table}[htb]
    \centering
    \setlength{\tabcolsep}{2pt}
    \begin{tabular}{|l|l|l|l|l|l|l|}
    \hline
         Type & \#Stars&     Size &      Fund In &   Fund Out &      Period & \#Scams \\
         \hline
         IN &   1,575  &  585;19&   17597; 104&          &      925;96 & 585;20\\
        \hline
         OUT &  61 &   66;10&       &           19,646;465&    476;56 & 66;11\\
        \hline
         I/O & 73 &   159;15&  22,447;557& 22,734;540&    476;56 & 159;16\\
         \hline
    \end{tabular}
    \vspace{5pt}
    \caption{\textmd{Statistics for maximal \textit{scam stars} found in our \textbf{Uniswap} scammer dataset. Rounded maximum and average values are reported, e.g., the maximum and average size of an IN scam star is 585 and 19 scammer addresses, respectively. We set $p=90\%$.}}
    \label{tab:star_stats_uniswap}
    %\vspace{-15pt}
\end{table}

%\vspace{-5pt}
\begin{table}[htb]
    \centering
    \setlength{\tabcolsep}{0.5pt}
    \begin{tabular}{|l|l|l|l|l|l|l|}
    \hline
         Type &  \#Stars &  Size &      Fund In &   Fund Out &      Period & \#Scams \\
         \hline
         IN &  1,301 &   977;22 & 5,734;79&          &     1,150;146 & 1,008;27\\
         \hline
         OUT & 339 & 1,900;28   &     & 6,513;101        & 1,184;180   & 2,824,44\\
        \hline
         I/O & 251 & 9,012;59 & 939,528;4,293 & 592,168;2,699  & 1,025;51  & 9,183;66\\
        \hline
    \end{tabular}
    \vspace{5pt}
    \caption{\textmd{Statistics (maximum/average values) for maximal \textit{scam stars} found in our \textbf{Pancakeswap} dataset ($p=90\%$).}}
    \label{tab:star_stats_Pancakeswap}
    %\vspace{-10pt}
\end{table}

\textbf{Detecting Scam Stars.} Given a list of scammer addresses, we developed \texttt{StarDetector} (see App.~\ref{app:star}), %(see App.~\ref{app:star}), 
an algorithm that detects all scam stars containing such addresses. 
We ran our star detection algorithm on the both scammer datasets and report the statistics in Tables~\ref{tab:star_stats_uniswap} and~\ref{tab:star_stats_Pancakeswap}.
Note that ``\#Stars'' is the number of stars found in our datasets, ``Size'' is the number of scammer addresses in a star, ``Fund In''/''Fund Out'' are the total fund coming in/out of a center, ``Period'' is the number of days between the earliest to the lattest scams, and ``\#Scams'' counts the number of scam pools in a star. Setting $p=80$-$95\%$ yields similar statistics (see App.~\ref{app:star}).

The stars cover 32,597 unique scammer addresses, accounting for 22.4\% of the Uniswap scammer dataset.
For Pancakeswap, the stars cover 51,368 unique scammer addresses, or 21.6\% of the dataset.

% \begin{table}[htb]
%     \centering
%     \setlength{\tabcolsep}{1pt}
%     \begin{tabular}{|l|l|l|l|l|l|l|}
%     \hline
%          Type &  \#Stars &  Size &      Fund In &   Fund Out &      Period & \#Scams \\
%          \hline
%          IN &  1,301 &   977;22 & 261;5 &          &     1,150;146 & 1,008;27\\
%          \hline
%          OUT & 339 & 1900;28   &     & 66;4        & 449;54   & 616;18\\
%         \hline
%          IN/OUT & 398 & 200;16 & 2,031;76 & 2,030;60  & 857;33  & 222;20\\
%         \hline
%     \end{tabular}
%     \vspace{5pt}
%     \caption{\textmd{Statistics for \textit{scam stars} found in our \textbf{Pancakeswap} dataset. Rounded maximum and average values are reported.}}
%     \label{tab:star_stats_Pancakeswap}
%     \vspace{-10pt}
% \end{table}

%\red{Intersections of addresses covered by chains and stars and major flow, pairwise.}

%\red{More analysese are required for stars, e.g., the percentage of scammer addresses belonging to one of the stars, whether one node belongs to different stars, and more comments/insights on the results reported in the tables.}

\subsection{Max-In-Max-Out Scam Chains}
\label{subsec:chain}

%We first discuss the simple \textit{max-in-max-out scam chain} before moving on to the more sophisticated \textit{funding scam chain}.
Our goal is to capture the main \textit{funding flow} among scammer addresses. One typical flow shape is a \textit{chain} (or a path), in which each scammer address was funded by another scammer address, carried out scams, and transferred fund to the next scammer address. %More sophisticated flows will be dealt with in Section~\ref{subsec:majority_flow}.

\begin{definition}[Simple Scam Chain]
    \label{def:chain-simple}
    A \textit{simple scam chain}, or a \textit{max-in-max-out scam chain}, is a list of $n$ \textit{scammer} addresses $s_1,s_2,\ldots,s_n$ ($n \geq 2$) satisfying the following conditions.
    \begin{itemize}
        %\item (C1) All $s_1,\ldots,s_n$ are one-day Rug-Pull scammers.
        \item[(C1)] $s_i$ is the \textit{largest funder} of $s_{i+1}$, and $s_{i+1}$ is the \textit{largest beneficiary} of $s_i$, for every $i\in \{1,\ldots,n-1\}$.
        \item[(C2)] The transfer from $s_i$ to $s_{i+1}$ occurred \textit{after} $s_i$ has completed its last scam and \textit{before} $s_{i+1}$ started its first scam. 
    \end{itemize}
    A scam chain is \textit{maximal} if no other EOAs can be added to obtain a longer chain. We only consider maximal scam chains in this work. It is obvious that each scammer belongs to at most one chain. 
\end{definition}

\vspace{-10pt}
\begin{figure}[htb]
    \centering
    \includegraphics[scale=0.9]{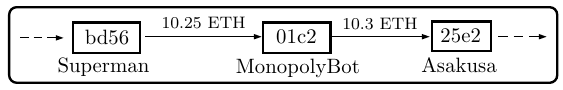}
    \caption[abc]{\textmd{Part of a \textit{simple scam chain} of length 47, starting with 0xc3e8290045952d520f4c2eb7e8725cabc4c8\textbf{b5d6}. Each address performed one Rug Pull and then transferred fund to the next one.}}
    \label{fig:chain_simple}
    \vspace{-5pt}
\end{figure}
%\footnotetext{Its full address is 0xC3E8290045952D520f4c2Eb7E8725CaBc4c8B5D6.}

We designed a simple algorithm %(see Appendix~\ref{app:simple_chain} 
to find all maximal scam chains among a given list of scammer addresses. 
%For the Uniswap scammer dataset, we found 4,494 chains with length at least two (the rest of the scammer addresses do not belong to any chain), in which the longest chain has length 274 with the first address 0x79daa9236e6825f023ab3ebd2cecfe94b48789d1 and the last address 0x3ba3a256cc8b13cdba3ad6a34352e0cfb51bbbe0. This chain is connected with another long chain of length 55 via an intermediate, non-scammer address ending with e74d. The average chain length on Uniswap is 4.4. Within each chain, we also counted the number of scam pools per address. The chain with the maximum number of scam pools per address is a length-two chain starting with 0xee80ba78889acf579b6a8d503470b4a67cf77076 and ending with 0xe6a56d9624dbbfa72159cc51d9f82604d49b6dd0 performing 162 and 60 scams respectively totalling 222.
Our findings on both datasets are reported in Table~\ref{tab:chain_stats}. %Except the number of chains, the rest are measurements per chain. 
``\#Chains'' is the number of chains found in the dataset, ``Length'' is the number of scammer addresses in a chain, ``Ave. Transfer'' is the average amount of native tokens transferred between two consecutive addresses, ``Period'' is the number of days between the first and last scams, and ``\#Scams'' is the total number of scam pools in a chain. 
All numbers are rounded.

%\vspace{-5pt}
\begin{table}[htb]
    \centering
    \setlength{\tabcolsep}{1pt}
    \begin{tabular}{|l|l|l|l|l|l|}
    \hline
         \textbf{DEX} & \textbf{\#Chains} &    \textbf{Length} &      \textbf{Ave. Transfer} &    \textbf{Period (days)} & \textbf{\#Scams} \\
         \hline
         UNI & 4,494  &  274;4&      642;35 & 369;3 & 339;6 \\
        \hline
         CAKE & \rev{14,028} & \rev{713;4}&   \rev{1,486;13}  &        \rev{251;2}  & \rev{142;2}\\
        \hline
    \end{tabular}
    \vspace{5pt}
    \caption{\textmd{Statistics for maximal \textit{scam chains} in \textbf{Uniswap} and \textbf{Pancakeswap} scammer datasets. Max/average values are reported.}}
    \label{tab:chain_stats}
    %\vspace{-10pt}
\end{table}

Uniswap scam chains cover 19,977 unique scammer addresses, or 13.7\% of the dataset. For Pancakeswap, they cover 52,014 addresses, or 21.8\% of the dataset. See Apps.~\ref{app:scam_chain} and~\ref{app:scam_patterns_more_analysis} for further analyses.

We observe that current scam labelling on Etherscan and the alike is reported case by case perhaps due to \textit{the lack of a cluster view}. For example, 0xcba70cc7ea0ff51222a24f659515b1e425a2a913 was labeled as a scam on Etherscan. This address then transferred 74 ETH to 0x3be0cc5ef23b03da49519bea21187901d95b6875, which conducted another Rug Pull but has not been reported. Both addresses belong to a scam chain of length 23 detected by our algorithm.

\subsection{Major Scam-Funding Flows}
\label{subsec:majority_flow}

While the max-in-max-out scam chains capture many cases where each scammer address had a major funder and transferred most scammed fund to another scammer address, they miss the complex cases where there are \textit{more than one} major funders and/or beneficiaries. 
For example, when two addresses funded the scammer with 10 and 5.2 ETH (as in Fig.~\ref{fig:majority_flow}), only the one funding 10 ETH will be added to the chain. 
Intuitively, in a more general scam-funding flow, each scammer address was first fully funded by one or more scammer addresses (major funders), then carried out one or several scams, and after all the scams are completed, transferred most of the scam revenue to other scammer addresses (major beneficiaries).
For example, in Fig.~\ref{fig:majority_flow}, the chain can only capture addresses along the path \textbf{e3df}$\to$\textbf{5a95}$\to$\textbf{fc34}, hence missing \textbf{9cb0} and \textbf{9dbb}. % of the whole major flow.

%\vspace{-5pt}
\begin{figure}[htb]
    \centering
    \includegraphics[scale=1]{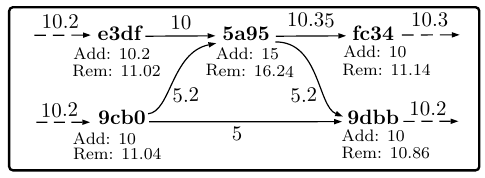}
    \caption[]{\textmd{A minimal \textit{major flow}. The \textit{input} address \textbf{9cb0} (full address: 0x15a828abe5ef29fa9fbe5c0774110232f908\textbf{9cb0}) added 10 ETH into a scam pool and removed 11.04 ETH. It then transferred 5.2 ETH and 5 ETH to its major beneficiaries \textbf{5a95} and \textbf{9dbb}. The \textit{internal} address \textbf{5a95}, after funded by its major funders \textbf{9cb0} and \textbf{e3df}, ran one scam and transferred fund to its major beneficiaries, the \textit{output} addresses \textbf{fc34} and \textbf{9dbb}. Note that \textbf{5a95} was \textit{fully funded} by its major funders and transferred at least 90\% of its scam revenue (obtained from the last pool) to its major beneficiaries.}}
    \label{fig:majority_flow}
    %\vspace{-10pt}
\end{figure}
%\footnotetext{The full address is 0x15a828abe5ef29fa9fbe5c0774110232f908\textbf{9cb0}.}

We formalize below the more sophisticated concept of \textit{major scam-funding flow}, or \textit{major flow} for short, which takes into consideration the funding amount required to fund a scam (when providing liquidity) and its revenue (when removing liquidity), and captures the intuition of a general scam-funding flow discussed above.
A major flow consists of three kinds of (scammer) addresses: \textit{input}, \textit{internal}, and \textit{output}. Each \textit{internal} address received 100\% funding from input addresses and/or other internal addresses (\textit{major funders}) to carry out its first scams, then transferred at least $p=90\%$ of its last scam's revenue to other internal and/or output addresses (\textit{major beneficiaries}). We observed that a scammer address may also spend some of its scam revenue on wash trading other scam pools or other activities unrelated to scam funding. Hence, only $p=90\%$ is required instead of 100\% to account for those small spendings (we also examine other values of $p$). \textit{Input} and \textit{output} addresses behaved similarly, but the funders of input  and beneficiaries of output addresses do not belong to the cluster, respectively. 
\begin{definition}[Scam-Funding Transactions]
    \label{def:major_transactions}
    For a scammer address $s$, let $T_F(s)$ be the minimum set of in-transactions to $s$ with the highest values \textit{before the first scam} of $s$ occurred that together provide enough funding for its first scam. Let $T_B(s)$ to be the minimum set of out-transactions from $s$ with the highest values \textit{after the last scam} of $s$ occurred that together cover at least $p=90\%$ of the revenue of its last scam. The sets $F(s) \triangleq \big\{f \colon t=(f,s) \in T_F(s)\big\}$ and $B(s) \triangleq \big\{b\colon t=(s,b) \in T_B(s)\big\}$ are called the sets of \textit{major funders} and \textit{major beneficiaries} of $s$, respectively. 
\end{definition}

\vspace{-5pt}
\begin{definition}[Major Flow]\label{def:majority_flow}
    Let $S$ be a set of scammer addresses. A \textit{major (scam-funding) flow} in $S$ is a directed graph $G=(V,T)$ with the vertex set $V\subseteq S$ ($|V|\geq 2$) representing the scammer addresses and the edge set $T$ representing the native-token transfers between scammer addresses in $V$ that satisfies (P1) and (P2) given below. For $s \in V$, we use $I_G(s)$ and $O_G(s)$ to denote the set of incoming edges to $s$ and outgoing edges from $s$ in $G$, respectively.   
    \begin{itemize}
        \item[(P1)] For every $s\in V$, if $I_G(s)\neq \varnothing$ then $I_G(s) \equiv T_F(s)$, and if $O_G(s) \neq \varnothing$ then $O_G(s) \equiv T_B(s)$. 
        \item[(P2)] $G$ is \textit{weakly} connected, i.e. there is a path from $u$ to $v$ in the underlying \textit{undirected} graph of $G$ for every two distinct vertices $u$ and $v$ in $V$.
    \end{itemize} 
    An address $s \in V$ is an \textit{input} if $I_G(s)=\varnothing$ and $O_G(s)\neq \varnothing$, an \textit{output} if $I_G(s)\neq \varnothing$ and $O_G(s)= \varnothing$, and an \textit{internal} address if $I_G(s)\neq \varnothing$ and $O_G(s)\neq \varnothing$.
    % \[
    % \begin{cases}
    %     \text{\textit{input} address}, &\text{ if } I_G(s)=\varnothing \text{ and } O_G(s)\neq \varnothing,\\
    %     \text{\textit{output} address}, &\text{ if } I_G(s)\neq \varnothing \text{ and } O_G(s)= \varnothing,\\
    %     \text{\textit{internal} address}, &\text{ if } I_G(s)\neq \varnothing \text{ and } O_G(s)\neq \varnothing.
    % \end{cases}
    % \]
    \textit{Minimal} major flows are those containing no proper major flows as subgraph. \textit{Maximal} flows, on the other hand, are not contained as proper subgraph in any other major flows.
\end{definition}
%\vspace{-5pt}

\begin{example}
    \label{ex:majority_flow}
    The major flow $G$ given in Fig.~\ref{fig:majority_flow} consists of five scammer addresses ending with \text{e3df}, \text{9cb0}, \text{5a95}, fc34, and \text{9dbb}.
    We can verify that (P1) are satisfied for all vertices in $G$. For instance, (P1) is satisfied for 5a95 since $I_G(\text{5a95}) = \{\text{e3df},\text{9cb0}\}$, which consists of its two major funders: c80f and f9f8 together transferred $15.2=10+5.2$ ETH, which covers completely the amount of 15 ETH that 5a95 added to its scam pool. Also, $O_G(\text{5a95})=\{\text{fc34},\text{9dbb}\}$, which consists of the two major beneficiaries of 5a95: after removing 16.24 ETH from its last scam pool, it transferred $15.55 = 10.35+5.2$ ETH, which is $\geq 90\%$ of 16.24 ETH, to fc34 and 9dbb. (P2) is satisfied as $G$ is weakly connected. 
    Note that $G$ is a \textit{minimal} major flow containing $t_1 \triangleq (\text{e3df},\text{5a95})$. Indeed, as both $t_1$ and $t_2=(\text{9cb0},\text{5a95})$ belong to $T_F(\text{5a95})$ and $t_1 \in G$, according to (P1), $t_2\in G$. Next, since both $t_2$ and $t_3 \triangleq (\text{9cb0},\text{9dbb})$ belong to $T_B(\text{9cb0})$ and $t_2 \in G$, (P1) implies that $t_3\in G$. Similarly, as both $t_3$ and $t_4 \triangleq (\text{5a95},\text{9dbb})$ belong to $T_F(\text{9dbb})$, (P1) requires that $t_4\in G$. Finally, as $t_4$ and $t_5 \triangleq (\text{5a95},\text{fc34})$ belong to $T_B(\text{5a95})$, (P1) implies that $t_5\in G$. 
    % \begin{table}[]
    %     \centering
    %     \begin{tabular}{|l|l|l|}
    %         \hline
    %         Scammer $s$ & Funders $F(s)$ & Beneficiaries B(s) \\  \hline
    %         \text{9ac2} & $\{\cdots\}$ & $\{\text{29cc, 707d}\}$\\ \hline
    %         \text{29cc} & $\{\text{9ac2}\}$ & $\{\text{d378}\}$ \\ \hline
    %         \text{707d} & $\{\text{9ac2}\}$ & $\{\text{f9f8}\}$\\ \hline
    %         \text{d378} & $\{\text{29cc}\}$ & $\{\text{c80f}\}$ \\ \hline
    %         \text{f9f8} & $\{\text{707d}\}$ & $\{\text{2970}\}$ \\ \hline
    %         \text{5a86} & $\{\text{707d}\}$ & $\{\text{fa3b}\}$ \\ \hline
    %         \text{0514} & $\{\text{707d}\}$ & $\{\text{ce30}\}$ \\ \hline
    %         \text{c80f} & $\{\text{d378}\}$ & $\{\text{2970}\}$ \\ \hline
    %         \text{2970} & $\{\text{f9f8}, \text{c80f}\}$ & $\{\cdots\}$ \\ \hline
    %         \text{fa3b} & $\{\text{5a86}\}$ & $\{\cdots\}$ \\ \hline
    %         \text{ce30} & $\{\text{0514}\}$ & $\{\cdots\}$ \\ \hline
    %     \end{tabular}
    %     \caption{11 scammer addresses and their major funders and beneficiaries in a partial major flow in Example~\ref{ex:majority_flow}.}
    %     \label{tab:majority_flow_example}
    %     \label{tab:my_label}
    % \end{table}
\end{example}

Finding all (maximal) major flows among a given list $S$ of scammer addresses is a nontrivial task due to the strong properties required by the clusters. 
In particular, (P1) requires that all major funders/beneficiaries of every address must be either all present or all absent in the cluster. It also implies that for every edge/transfer $t=(f,b)$ in the cluster, $f$ must be a major funder of $b$ and $b$ must also be a major beneficiary of $f$ (see Def.~\ref{def:major_transactions}).   

We proposed \mfd{} (see App.~\ref{app:majority_flow}), an efficient algorithm that identifies all (maximal) major flows among all scammers in $S$ in polynomial time in $|S|$ and $|T(S)|$ - the total number of transactions within $S$.
The key idea is to first identify all \textit{minimal} major flows, i.e., those that satisfy (P1)-(P2) and contain no proper major-flow sub-clusters. These minimal clusters will then be merged (if they share common addresses) to form \textit{maximal} clusters.

Theorem~\ref{thm:majority_flow} captures the correctness and complexity of the algorithm \mfd. Its proof can be found in App.~\ref{app:majority_flow}. 

\begin{theorem}[Major Scam-Funding Flow]
    \label{thm:majority_flow}
    Let $S$ be a set of scammer addresses. Then the following statements hold.
    \begin{itemize}
        \item[(a)] The union of two major flows that have at least one vertex in common is always a major flow. 
        \item[(b)] Step~2 of \mfd{} returns all the \textit{minimal} major flows $T_1,\ldots,T_m$ in $S$.
        \item[(c)] Step~3 of \mfd{} returns all the \textit{maximal} major flows in $S$ in time $O\big(|S||T(S)|^2)\big)$, where $T(S)$ denotes the set of all transactions to or from addresses in $S$. %$T(S)\triangleq \{(f,b)\colon f, b \in S, f \in F(b), b \in B(f)\}$ denotes the set of all scam-funding transactions within $S$.
    \end{itemize}
\end{theorem}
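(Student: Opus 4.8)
The plan is to treat part (a) as the enabling lemma, derive part (b) from a single-edge \emph{closure} construction, and then obtain part (c) by grouping the minimal flows produced in Step~2 according to vertex sharing. The crucial structural fact underlying everything is that the canonical sets $T_F(s)$ and $T_B(s)$ depend only on $s$ (and its transaction history), not on the ambient flow, so (P1) has an ``all-or-nothing'' character at each vertex: in any major flow the incoming edge set at $s$ is either empty or exactly $T_F(s)$, and the outgoing set is either empty or exactly $T_B(s)$.

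For (a), let $G_1=(V_1,T_1)$ and $G_2=(V_2,T_2)$ be major flows with $V_1\cap V_2\neq\varnothing$, and set $G=(V_1\cup V_2,\,T_1\cup T_2)$. For a shared vertex $s$, the all-or-nothing property forces each of $I_{G_1}(s)$ and $I_{G_2}(s)$ to be either $\varnothing$ or $T_F(s)$, so their union $I_G(s)$ is again either $\varnothing$ or $T_F(s)$; the out-edges are symmetric. For a vertex lying in only one of $V_1,V_2$, no edge of the other flow is incident to it, so its in/out-sets are unchanged. Hence (P1) holds at every vertex of $G$, and since $G_1,G_2$ are each weakly connected and share a vertex, $G$ is weakly connected, giving (P2).

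For (b) I would first formalize the forcing rules implicit in (P1): if an edge $t=(f,b)$ lies in a major flow, then since $t\in T_F(b)$ the whole set $T_F(b)$ must lie in it (because $I_G(b)\neq\varnothing$), and since $t\in T_B(f)$ the whole set $T_B(f)$ must lie in it. Iterating these rules from a seed edge $t$ yields a well-defined closure $C(t)$, the smallest edge set containing $t$ and closed under ``include all of $T_F$ at each head and all of $T_B$ at each tail.'' I would verify that $C(t)$ with its induced vertices satisfies (P1) by construction and (P2) because each forcing step attaches new edges through an already-present vertex, so $C(t)$ is a major flow. Minimality and completeness then follow from two observations: any major flow containing $t$ must contain $C(t)$, so $C(t)$ is the unique minimal major flow through $t$; and any minimal major flow $G$ contains at least one edge $t$ (as $|V|\geq 2$ and $G$ is weakly connected), whence $C(t)\subseteq G$ and minimality forces $G=C(t)$. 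This identifies Step~2's output with the enumeration of the $C(t)$.

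For (c) I would reduce to connected components. Define the intersection graph on $T_1,\dots,T_m$ with two flows adjacent iff they share a vertex. By (a) the union $M$ of all minimal flows in one component is a major flow, and by (b) every major flow is the union of the minimal flows it contains. The heart of the argument is that $M$ is \emph{maximal}: if $M\subsetneq G$ for some major flow $G$, weak connectivity of $G$ produces an edge $e\in G$ with an endpoint in $V(M)$, so $C(e)\subseteq G$ shares a vertex with $M$ and therefore lies in the same component, forcing $e\in M$, a contradiction. Distinct components are vertex-disjoint and hence yield distinct maximal flows, so Step~3's component-merging returns exactly the maximal flows; the $O\!\big(|S|\,|T(S)|^2\big)$ bound then follows from $m\le|T(S)|$, computing each $T_F,T_B$ and each $C(t)$ by scanning the incident transactions, and a union-find merge. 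I expect the delicate step to be this maximality direction of (c)---arguing that the boundary edge's closure cannot escape the component of $M$---together with pinning down that Step~2 enumerates precisely the single-edge closures; both rely on (P1) in its full all-or-nothing strength as captured by (a).
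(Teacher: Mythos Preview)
Your proposal is correct and follows essentially the same route as the paper: the all-or-nothing reading of (P1) for part~(a), the closure construction for part~(b) (the paper phrases it as showing each $G_i$ is simultaneously a major flow and a subgraph of a minimal one, which is exactly your $C(t)$ argument with the seed being a set $T_F(s)$ or $T_B(s)$ rather than a single edge), and the identical boundary-edge contradiction for maximality in part~(c). The only cosmetic difference is that the paper bounds Step~3 via naive pairwise intersection of the $V_i$'s instead of union-find, but both land on the same $O\big(|S|\,|T(S)|^2\big)$ estimate through $m\le |T(S)|$.
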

% \begin{proof}[Proof of Theorem~\ref{thm:majority_flow}]
%     See Appendix~\ref{app:majority_flow}. 
% \end{proof}

\textbf{Identifying Maximal Major Flows.} Although the major flows require very strict properties, we were still able to find a good number of them on both chains (see Table~\ref{tab:majority_flow_uniswap_pancake}, and also Tables~\ref{tab:majority_flow_uniswap} and~\ref{tab:majority_flow_pancake} in App.~\ref{app:majority_flow}). 
The ``Size'' is the number of scammer addresses in a cluster, the ``Width'' is the \textit{maximum} size of a \textit{minimal} major flow contained in a \textit{maximal} major flow. For example, the major flow given in Fig.~\ref{fig:majority_flow} has size 5 and width 5 as it is a minimal one. 
Note that those with \textit{widths} two are special max-in-max-out chains that also take into account the relation between the transfer amount and the scam cost/revenue.
Those with widths larger than two represent the more sophisticated clusters with some nodes having more than one major funder/beneficiary.
``Fund In'' refers to the total incoming fund from the major funders of all \textit{input} addresses. ``Fund Out'' refers to the total outgoing fund to the major beneficiaries of all \textit{output} addresses. For example, the major flow given in Fig.~\ref{fig:majority_flow} has fund-in $20.4=10.2+10.2$ ETH and fund-out $20.5 = 10.3 + 10.2$ ETH.

%\vspace{-3pt}
\begin{table}[htb]
\setlength{\tabcolsep}{4pt}
\begin{tabular}{|l|l|l|l|l|l|}
\hline
\textbf{DEX} & \textbf{\#Cluster} & \textbf{Size} & \textbf{Width} & \textbf{Fund In} & \textbf{Fund Out} \\ \hline
 UNI                            & 5,298                                    & 156;4      & 7;3       & 1,645;32     & 1,797;34      \\ \hline
 CAKE                             & 16,467                                   & 820;4      & 6;3       & 1,099;12     & 2,626;12     \\ \hline
\end{tabular}
\vspace{5pt}
\caption{\textmd{Statistics (maximum; average values) for \textit{maximal major flows} in \textbf{Uniswap} and
\textbf{Pancakeswap} scammer datasets ($p=90\%$).}} %Rounded maximum and average values are reported.}}
\label{tab:majority_flow_uniswap_pancake}
%\vspace{-10pt}
\end{table}

Major flows cover 18,692 unique scammer addresses (12.8\%) on Uniswap, and 58,309 unique scammer addresses (24.5\%) on Pancakeswap. 
All three patterns (chain, star, major flow) cover 54,054 unique scammer addresses, corresponding to 37.1\% of all scammer addresses in the Uniswap scammer dataset. For Pancakeswap, these patterns cover 105,660 unique scammer addresses, corresponding to 44.3\% of the Pancakeswap dataset. See App.~\ref{app:scam_patterns_more_analysis} for more analysis.

\section{Scam Clusters}
\label{sec:clustering}

We have seen in Section~\ref{sec:pattern_detection} how around 40\% of scammer addresses operated following special scam patterns. 
In this section, we investigate more general \textit{scam clusters}, each of which consists of scammer addresses linked together via direct native-token (ETH/BNB) transfers or via scam pools. %, e.g., one is the token creator, the other is the pool creator or the pool rugger. %, or 3) one scammer address bought (wash traded) a scam token from a scam pool run by another scammer address. 
Note that all scam patterns investigated in Section~\ref{sec:pattern_detection} make use of direct native-token transfers only. 

To corroborate our view that addresses in the same cluster are likely owned by the same serial scammer, we developed \textit{AST-Jaccard score} (see App.~\ref{app:similarity_score}), a new code-similarity score based on a careful integration of contract source code's \textit{abstract syntax tree} (AST), \textit{hash function}, and \textit{Jaccard similarity} to overcome typical code obfuscation techniques used in contract cloning. We measured the \textit{similarities} of scam contracts, and found that \textit{intra-cluster} contracts are mostly similar (average similarity score greater than 70\%) while \textit{inter-cluster} contracts are mostly dissimilar (average similarity score less than 30\%). This is yet another strong indicator that \textit{scammer addresses within each cluster are controlled by the same serial scammer}, apart from the fact that such addresses are already either behind the same scam pools or tightly connected in the transfer network. 
%With the new concept of scam clusters and the affirmative contract similarity analysis of them, we have further broadened our understanding of how serial scammers created similar contracts to run a series of Rug-Pull scams on popular DEXs.  

\subsection{Generating Scam Clusters}

\begin{definition}[Scam Cluster]
    \label{def:scam_cluster} 
    Given a set $S$ of all scammer addresses, a \textit{scam cluster} is an undirected graph $C=\big(V(C),E(C)\big)$, where $V(C)\subseteq S$ has at least two addresses and $E(C)\subseteq V(C)\times V(C)$ is a set of edges among them, satisfying the following conditions. 
    \begin{itemize}
        \item (C1) An edge $e=(u,v) \in E(C)$ exists if and only if $u$ and $v$ had a direct native-token transfer (ETH/BNB) on the blockchain, or $u$ and $v$ are different scammer addresses associated with the same scam pool (see Def.~\ref{def:scammer_address}). %, \rev{or $u$ bought a scam token from a pool that $v$ is a scammer address behind.} 
        \item (C2) $C$ is connected, that is, for every $u,v\in V(C)$, $u\neq v$, there exists a path from $u$ to $v$ in $C$.
    \end{itemize}
    Note that we only consider \textit{maximal} scam clusters, i.e., no new scammer address can be added to achieve a larger one. %By definition, each cluster has \textit{at least two} scammer addresses.
\end{definition}
%\vspace{-5pt}

Given a dataset $S$ of all scammer addresses, one can identify all (maximal) scam clusters by first forming an undirected graph $G=(V,E)$ with $V=S$ and $E$ is formed using (C1), replacing $V(C)$ and $E(C)$ by $V$ and $E$, respectively, and then find all of its \textit{connected components}. 
The scammer addresses within each connected component form a scam cluster. We ran this simple algorithm on both the Uniswap and Pancakeswap datasets of scammer addresses and identified \textit{all} one-day Simple Rug Pull scam clusters on these DEXs. 

We found 8,467 clusters on Uniswap and 18,042 groups on Pancakeswap, which are formed from 44,650 (30.7\%) and 120,101 unique scammer addresses (50.4\%), respectively. The biggest Uniswap cluster contains 4,155 scammer addresses, while the biggest cluster on Pancakeswap contains 23,399 unique scammer addresses.

\begin{figure}[htb]
    \centering
    \includegraphics[width=0.4\textwidth]{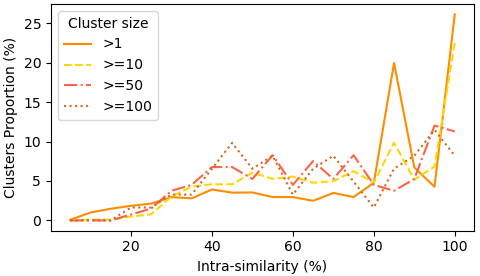}
    \caption{\textmd{\textit{Intra-cluster} similarities for Pancakeswap scam clusters.}}
    \label{fig:intra_pan}
    \vspace{-12pt}
\end{figure}

\begin{figure}[htb!]
    \centering
    \includegraphics[width=0.4\textwidth]{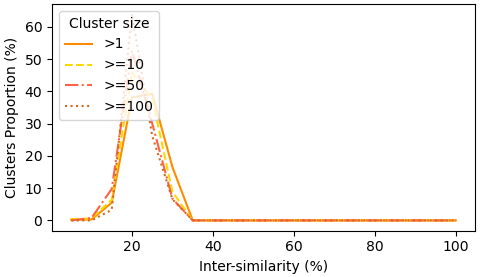}
    \caption{\textmd{\textit{Inter-cluster} similarities for Pancakeswap scam clusters.}}
    \label{fig:inter_pan}
    %\vspace{-12pt}
\end{figure}

\subsection{Measuring Cluster Contract Similarity}
\label{subsec:cluster_contract_similarity}

% \rev{ To identify the relationship between scammer addresses, we measure the similarity of contracts created by them. We expect that tokens created by addresses in the same cluster are very similar to each other while tokens deployed by addresses in different clusters must be dissimilar. To do that, we first download contract source codes from Etherscan~\cite{EtherscanAPI} for tokens on Uniswap (resp., BSCScan~\cite{BSCscanAPI} for tokens on Pancakeswap). According to these explorers, nearly 99\% of contracts on Ethereum and 91\% of contracts on BSC do not make their source codes public~\cite{eth_verified_contract,bsc_verified_contract}. Therefore, only 3,498 clusters on Uniswap and 5,672 clusters on Pancakeswap have more than two verified tokens.

We first retrieved the contract source codes for the scam tokens in our datasets from Etherscan~\cite{EtherscanAPI} and BSCScan~\cite{BSCscanAPI}. %for tokens on Uniswap (resp.,  for tokens on Pancakeswap). 
It is known that nearly 99\% of contracts on Ethereum and 91\% of contracts on BSC do not make their source codes public~\cite{eth_verified_contract,bsc_verified_contract}. 
We found that only 5,080 clusters on Uniswap and 12,495 clusters on Pancakeswap have at least two verified tokens.
We only examined these clusters.

\textbf{Intra-cluster similarity.} %Contract cloning is a good feature for us to identify the relationship between scammers. So that we measure how similar contracts within the same cluster are. To that end, we calculate the intra-cluster similarity of each available cluster.
For each cluster, %that has at least two available contract source codes, 
we computed all pairwise similarity scores and took the average. 
Noticeably, 1,390 clusters on Uniswap (27.4\%) and 3,069 clusters on Pancakeswap (24.6\%) have intra-cluster similarity 100\%. More than 50\% clusters on both DEX have intra-cluster similarity \textit{at least} 80\%. On average, intra-cluster similarities are 74\% on Uniswap and 73\% on Pancakeswap. 
%Fig.~\ref{fig:intra_pan} shows the intra-cluster similarity score distribution for Pancake clusters (see also App.~\ref{app:scam_clusters}).% for more analysis.

% Fig~\ref{fig:intra_uni} (Appendix~\ref{app:similarity_score}) shows the statistic of intra-cluster similarity with different sizes of groups on Uniswap. Specifically, 1,410 groups (30\%) on Uniswap have a similarity of 100\%. This proportion does not much change for the set of large-size groups. The opposite phenomenon is observed for the distribution of similarities on Pancakeswap \red{what is the opposite phenomenon here? Is it good or bad?}. The proportion of over 95\% similarity groups is the same as the proportion on Uniswap with the value of \rev{32\%} (see Fig~\ref{fig:intra_pan}, Appendix~\ref{app:similarity_score}). However, this number decreases by \rev{10\%} in the large-size group. The average intra-cluster similarities among all groups are 74\% and \rev{75\%} for Uniswap and Pancakeswap, respectively. 

\textbf{Inter-cluster similarity.}
Due to high computational cost, we randomly selected up to 100 tokens in each cluster to compare with other clusters. We also selected randomly 500 clusters to pair with the chosen cluster. We repeated this ten times and took the average. 
We found that 75\% and 99.97\% scam clusters on Uniswap and Pancakeswap have inter-cluster similarity scores \textit{at most} 30\%, respectively. Moreover, the average pairwise similarity score is 27\% on Uniswap and 21\% on Pancakeswap. 
Figs.~\ref{fig:intra_pan} and~\ref{fig:inter_pan} shows the intra-/inter-cluster similarity score distributions for the Pancake clusters. 
More analyses can be found in App.~\ref{app:scam_clusters}.
The observed low inter-cluster similarity scores and high intra-cluster similarity scores reinforce our view that \textit{each cluster corresponds to an individual scammer/organization who cloned their own scam contracts.} 

% We observe that current scam labelling on Etherscan and the alike is reported case by case perhaps due to \textit{the lack of a cluster view}. For example, 0xcba70cc7ea0ff51222a24f659515b1e425a2a913 was labeled as a scam on Etherscan. This address then transferred 74 ETH to 0x3be0cc5ef23b03da49519bea21187901d95b6875, which conducted another Rug Pull but has not been reported. Both addresses belong to a scam chain of length 23 detected by our algorithm.

%The results in Fig~\ref{fig:inter_uni} and Fig~\ref{fig:inter_pan} (Appendix~\ref{app:similarity_score}) indicate that tokens in one cluster are dissimilar to tokens in other clusters. Regardless of group size, the similarity is always below 39\% for Uniswap clusters and \rev{30\%} for Pancakeswap clusters. The average similarity for any pair of clusters on Uniswap is 27\%, while that on Pancakeswap is \rev{21\%}.%, showing that scammers on different group are unrelated to each other.

%\vspace{-5pt}
\section{Cluster-Aware Scam Profit}
\label{sec:cluster_aware_scam_profit}

While the presence of wash traders in Rug Pulls on Uniswap was observed as early as 2021 in Xia~\et~\cite{Xia_etal_2021} and also noted in subsequent works~\cite{Cernera_etal_USENIX2023, Sharma_etal_2023, Huynh_etal_arxiv_2023}, as far as we know, the \textit{wash-trading component} has never been accounted for in existing estimates of scam profits. One reason is that without the reconstruction of a scam cluster (or more generally, a scam network - see App~\ref{app:scam_network}), it is impossible to determine the wash-trading component. Our work seeks to address this research gap by leveraging the newly introduced scam clusters (Section~\ref{sec:clustering}) and employing the so-called \textit{cluster-aware} scam profit formula. Let us first review the commonly used profit formula for Rug Pulls and demonstrate it via a real example.

\vspace{-3pt}
\begin{definition}(Naive scam profit formula, e.g.~\cite[Sec.~7.1]{Cernera_etal_USENIX2023}) 
\label{def:existing_profit_formula}
The profit for a scam pool $p$ is $\delta(p)\triangleq Y(p)-X(p)$, where
\begin{itemize}
    \item $X(p)$ is the amount of high-value token (ETH/BNB) that the scammer addresses gave to the pool (by adding liquidity or swapping) \textit{plus} transaction fees.
    \item $Y(p)$ is the amount of high-value token (ETH/BNB) that the scammer addresses gained from the pool (by removing liquidity or swapping) \textit{minus} transaction fees.
\end{itemize} 
\end{definition}

\vspace{-5pt}
\begin{example}
\label{ex:pumpkin}
The creator of the scam token \texttt{PUMPKIN} on Uniswap (0x6C1e7FfAe984b5644C2ab95FC3aDF5794317C6aE) added $1.8$ ETH, swapped in twice totaled 0.22 ETH and removed $9.27$ ETH from the pool. Ignoring the very small transaction fees, the naive profit formula (Def.~\ref{def:existing_profit_formula}) gives $X = 1.8+0.22 = 2.02$ ETH, and $Y = 9.27$ ETH, leading to a sizable profit of $Y-X = 9.27-2.02=7.25$ ETH (more than US \$11,000) for \texttt{PUMPKIN}'s creator in less than an hour. We will see next that this estimation is far off from the real profit.
\end{example}

\vspace{-5pt}
\begin{definition}[Cluster-aware profit]
    \label{def:cluster_aware_profit_formula}
    The profit for a scam pool $p$ \textit{within a scam cluster} $C$ is $\delta(p,C) \triangleq Y(p,C)-X(p,C)$, where 
    \begin{itemize}
        \item $X(p,C)$ is the amount of high-value token (ETH/BNB) that the addresses in $C$ \textit{paid to the pool}, including liquidity additions and swap-ins, \textit{plus} transaction fees.
        \item $Y(p,C)$ is the amount of high-value token (ETH/BNB) that the addresses in $C$ \textit{gained from the pool} (by removing liquidity or swapping out), \textit{minus} transaction fees.
    \end{itemize}
    %\vspace{-5pt}
    Note that $X(p,C)$ includes the term $X(p)$ in Definition~\ref{def:existing_profit_formula} and the new wash-trading component $Z(p,C)$ - the total amount of high- value token paid to the pool by the wash traders in~$C$. The \textit{total profit of a scam cluster} $C$ is the sum of profits of all scam pools in the cluster $\Delta(C)\triangleq \sum_{p \in C}\big(Y(p,C)-X(p,C)\big)-T(C)$, where $T(C)$ is the total transaction fee spent on direct high-value token (ETH/BNB) transfers among the scammer addresses in $C$.
\end{definition}
%\vspace{-8pt}

\vspace{-9pt}
\begin{example}[continued from Example~\ref{ex:pumpkin}]
\label{ex:pumpkin_continued}
After constructing the scam cluster containing \texttt{PUMPKIN}'s creator address as discussed in Section~\ref{sec:clustering}, it becomes clear that \textit{all big investors were wash traders} from that cluster (see Fig.~\ref{fig:scam_cluster_PUMPKIN}). In particular, \textbf{4b10} and \textbf{25a1} swapped in (multiple times) 3.56 and 3.54 ETH in total, respectively, while the creator \textbf{c6ae} swapped in (twice) 0.22 ETH (see Fig.~\ref{fig:scam_cluster_PUMPKIN}). The cluster-aware formula (Def.~\ref{def:cluster_aware_profit_formula}) gives $X\approx 9.1463 = 1.8177+7.3286$, $Y=9.2661$, and hence the \textit{real profit} is merely $0.1198$ ETH.
\end{example}
% Value of y: 9.266071311881722
% Value of x: 1.8176988035308101
% Value of z: 7.328559491049676

Note that Xia~\et~\cite{Xia_etal_2021} identified \textit{1-hop} wash traders that received direct ETH transfers from scammer addresses before their swapping of scam tokens. While this may hold for many cases, it might also lead to a potential misconception that wash-trader addresses \textit{must} receive fund from the scammer address before performing wash trading.
The cluster in Fig.~\ref{fig:scam_cluster_PUMPKIN} provides a \textit{counterexample} to such a misconception. %that wash-trader addresses \textit{must} receive fund from the scammer address before performing wash trading. 
Moreover, we observed that some scammer addresses like \textbf{c8ae} and \textbf{fb0b} not only wash traded for the pool created by \textbf{b116} but also transferred ETH to it, apparently received nothing back. While this behavior appears completely irrational under the traditional different-addresses-operating-independent-scams point of view, it makes perfect sense from the scam-cluster perspective: if the entire cluster was operated by a single scammer, then it doesn't matter who transferred to/wash traded for whom, as long as these operations collectively work at the cluster level. 
%and  the operations  while its operations make perfect sense from the cluster level, they may still appear unreasonable from the individual scam point of view. 

% The \textit{total profit of a scam cluster} $C$ is the sum of profits of all scam pools in the cluster $\Delta(C)\triangleq \sum_{p \in C}\big(Y(p,C)-X(p,C)\big)-T(C)$, where $T(C)$ is the total transaction fee spent on direct high-value token (ETH/BNB) transfers among the scammer addresses in $C$.

\textbf{Scam Profit Evaluations.} We implemented and ran the naive and cluster-aware scam profit formulas on \textit{all} Uniswap and Pancakeswap clusters obtained in Section~\ref{sec:clustering}. We found that on average, the naive scam profit \textit{inflates} the pool/cluster profit by at least 32\% on Uniswap and 24\% on Pancakeswap. More specifically, on average, a scam pool and scam cluster on Uniswap yielded a profit of 2.685 ETH and 3.96 ETH according to the naive formula, respectively. However, if the cluster-aware formula is used, then the corresponding numbers are 2.03 ETH and 2.99 ETH. On Pancakeswap, the naive formula gives 3.841 BNB and 13.102 BNB for the average pool and cluster profits, while the cluster-aware formula returns 3.109 BNB and 10.605 BNB. 
Moreover, 2,395 clusters (28\%) on Uniswap and 7,441 clusters (41\%) on Pancakeswap have wash traders. For such clusters, the averaged profits given by the naive formula are 72\% and 41\% larger than those by the cluster-aware formula.

%\vspace{-5pt}
\begin{figure}[t]
    \centering
    \includegraphics[scale=1]{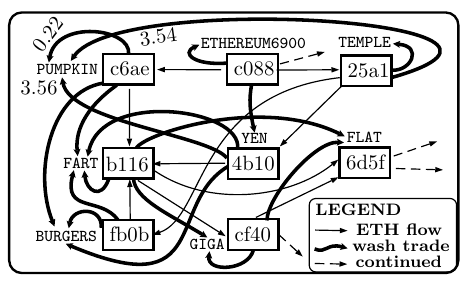}
    \vspace{-3pt}
    \caption{\textmd{Part of the scam cluster containing \textbf{c6ae}, the creator of the scam token \texttt{PUMPKIN}. The figure shows how eight scammer/wash trader addresses transferred ETH to each other (thinner arrows) and also wash traded their scam tokens (thicker arrows). The pool of \texttt{PUMPKIN} was heavily wash traded, thus inflating the perceived scam profit by more than 7 ETH.}}
    \label{fig:scam_cluster_PUMPKIN}
    \vspace{-5pt}
\end{figure}

\textbf{Extension to Scam Networks.} We observe from our datasets that sometimes scammers used non-scammer addresses in their operation. %to transfer funds indirectly to each others. 
In such cases, the scam clusters discussed in Section~\ref{sec:clustering} fail to include them and their activities. %connect them and may treat them as independent scammer addresses. 
Besides, the cluster-aware profit formula only captures wash trader addresses that \textit{are} scammer addresses. Therefore, it only provides an \textit{upper bound} on the true scam profit, which can only be determined if \textit{all} wash trader addresses can be identified, including non-scammer addresses. 

To address the aforementioned issues, one can extend the concept of scam cluster to \textit{scam network}. From our initial manual inspection, %of the scam datasets, 
a typical scam network contains not only scammer addresses but also \textit{supporting} addresses that serve distinctive roles in the overall scam operation including \textit{wash traders}, \textit{transferrers}, \textit{depositors}, \textit{withdrawers}, and \textit{coordinators}. % (see App.~\ref{app:scam_network_construction}). 
Theoretically, one can first reconstruct the scam network, treat all investor addresses in that network as wash traders, and use a \textit{network-aware} scam profit formula (a straightforward generalization of the cluster-aware formula) to obtain the \textit{true} scam profits for all pools in that network. 
However, reconstructing a scam network accurately and efficiently is a highly challenging task, %(see App.~\ref{app:scam_network}), 
which goes beyond the scope of this paper. Thus, we only conducted a preliminary discussion of scam networks (see Apps.~\ref{app:scam_network}) and left a more thorough investigation for future research. 

\section{Conclusions}
\label{sec:conclusion}

% DeFI summer 2024? 
% \url{https://www.techopedia.com/defi-summer-2024-vs-defi-summer-2020-the-new-era}
% Rug-Pull scams are still happening as we speak.
% Victims are still reporting Rug-Pull scams to chainabuse.com once every few days~\cite{chainabuse_rugpull}, with alleged Rug Pulls such as \texttt{Babylon}\footnote{0x2a5d3d6e5f61d747e710bf5b0be94c0d77bed774} (reported on 30/8/2024) just having its latest liquidity-removal transaction right at the time of writing (7/9/2024)\footnote{0xd5d23202267c69cdcf1be7b38098431eeae0d4fe5169e3266009c0255e83c3d6} and transferring away half a million dollars. %Two hours after completely draining \texttt{Babylon}'s liquidity (rug pulling), its creator transferred all of its balance (185 ETH) worth nearly half a million dollars to another address\footnote{0x3775934a2d74e1a90705cc44befc1b8d6bffe80e5481248f487de7b88acb78d4}. One can also confirm that \texttt{Babylon} is a scam token on UniswapV2 using well-known token verifiers such as Tokensniffer and %\footnote{https://tokensniffer.com/token/eth/fne912de023ukffea1mz8k62qsqdaq6xpejz4f1jxljahsbx2qqj8a52qqgs} and  Apespace. %\footnote{https://apespace.io/eth/0x2a5d3d6e5f61d747e710bf5b0be94c0d77bed774}. 

%\subsection{Our Contributions and Gained Insights}

In this work, we explored how serial Rug-Pull scammers operated on UniswapV2 and PancakeswapV2. We detected highly coordinated \textit{scam patterns}, and examined the Rug-Pull scam operations on DEXs as part of a large-scale  \textit{cluster/network} rather than as isolated incidents. 
We note that our scam patterns were rigorously formalized in order to incorporate essential information about the Rug Pull scams, including the times the liquidity was added to and removed from the scam pools and the scam expenses and revenues. We found that such scam patterns were abundant on Uniswap and Pancakeswap, covering around 40\% of all scammer addresses on both DEXs. This finding suggests that serial scammers did follow operational patterns that can be detected.  
We also formally defined the concept of scam clusters based on native-token transfers and scam creations. Noticeably, we found that scam contracts are highly similar within the same clusters and dissimilar across clusters, supporting our view that each cluster belongs to the same scammer who cloned the contracts. Finally, we proposed a more accurate scam profit formula by including the wash traders in the cluster. By evaluating the scam profit based on the naive and the new cluster-aware formula, we found that the naive formula has inflated significantly the true scam profit as it misses the wash-trading component of each scam.
%indicating some level of contract cloning. 

Note that we only collected and analyzed blockchain data that are publicly available. % and didn't try to link any addresses to real identities. 
We have also contacted Chainabuse~\cite{chainabuse}, Etherscan~\cite{Etherscan}, BSCScan~\cite{BSCscan}, and Hashdit~\cite{Hashdit} to report our findings. %all the scammer addresses, patterns and clusters identified in this work. 
The Python codes for our project are available online at~\cite{our_git_hub}. %~\url{https://github.com/duyhuynhdev/serial-scammer-analyser}.

%\vspace{-8pt}
\begin{acks}
This work was supported by the Australia Research Council under the Discovery Project Grant DP200100731. %and the CRC-P on Recycling composite wind turbine blades for circular economy. 
We thank the PC members and the reviewers for very helpful comments.
\end{acks}

\bibliographystyle{splncs04}
\bibliography{Attack_of_the_Clones_WWW_FINAL}

\appendix 

\section{Related Works}
\label{sec:related_works}

\subsection{Wash Trading}

\textit{Wash trading} activities in the cryptocurrency ecosystem  have been observed and studied in different contexts. For example, wash trading can be carried out to artificially increase the trading volume of cryptocurrency exchanges~\cite{LePennecFiedlerAnte_FRL_2021, VictorWeintraud_WWW21} to influence the perception of their popularity. Wash trading can also be used to boost the trading volume of a Non-Fungible Token (NFT) to reap the reward from an NFT marketplace or inflate its price for reselling~\cite{Morgia_etal_ICDCS_2023,VonWachter_etal_FCWorkshop22,Chen_etal_Internetware2024}. Wash trading activities were also observed in the context of Rug Pulls of ERC-20 tokens on Uniswap in Xia~\et~\cite[Sect.~5.3.3]{Xia_etal_2021}, under their investigation of collusion addresses. 
However, under their heuristics, only wash traders that have a \textit{direct} transfer with the scammers, i.e. 1-hop neighbors, are included. 
Even the recently developed \textit{A-A Wash-Trading Detector} for UniswapV2 on Dune from SolidusLab~\cite{soliduslabs_washtrading_tool,soliduslabs_washtrading_blog} can only detect self wash trading (or 0-hop wash trader). 
This simple wash-trading model fails to capture more sophisticated scams in which wash traders are multiple hops away, as observed in our datasets (see, e.g. the example in Fig.~\ref{fig:wt_9aa6}, App.~\ref{app:multi_hop_non_scammer_WT}). 

We note that most research on wash trading for NFTs in the literature (see, e.g.~\cite{VonWachter_etal_FCWorkshop22,Morgia_etal_ICDCS_2023,Chen_etal_Internetware2024} and the references therein) studied a different setting in which accounts trade NFTs \textit{directly} among themselves or via a \textit{centralized} marketplace like OpenSea or LooksRare, not via an exchange pool.  
For example, in Morgia~\et~\cite{Morgia_etal_ICDCS_2023}, a graph is built for each NFT with nodes being the addresses \textit{directly} interacting with the NFT and edges being their \textit{direct} trade of the NFT. Then, suspicious wash-trading groups include addresses that perform self-trade or strongly connected components that have a common funder or beneficiary, or maintain a zero-risk position (zero balance after all the transactions, factoring out the gas fees). The concept of zero-risk position is \textit{irrelevant} to the exchange pool setting of fungible ERC-20/BEP-20, in which investors trade \textit{with the pool} and \textit{not} among themselves. Also, the edges of the connected components in the wash-trading graph in~\cite{Morgia_etal_ICDCS_2023} represent NFT tradings, which doesn't exist in our setting as Uniswap/Pancakeswap investors do not trade ERC-20/BEP-20 tokens after buying them from the pool. 

\subsection{Scam Patterns and Clusters}

%(Note that their collusion addresses are not serial scammers, because they are behind a single pool only) 
Xia~\et~\cite{Xia_etal_2021} investigated a related concept of \textit{collusion} addresses on a scam pool, which consist of a) the scammer addresses that created the scam token/pool, b) the liquidity-provider addresses that added liquidity to the pool after being funded by the scammer addresses, c) the liquidity-remover addresses that removed the liquidity from the pool and then transferred ETH/valuable tokens to the scammer addresses, d) the investor addresses that were funded by the scammer addresses before buying the scam token or transferred ETH/valuable tokens to the scammer addresses after gaining a profit trading with the pool. 

While collusion addresses \cite{Xia_etal_2021} form a part of the scam cluster/network studied in our work, they are limited to \textit{individual} scam pools and do not capture the more general serial-scammer setting where \textit{multiple} addresses operate on \textit{multiple} related scam pools. Xia~\et{} also noted that there might be more complex networks (e.g. to launder their fund) operating behind the scams that their analysis failed to capture, and hence there might be many more scam addresses not identified by their heuristics (see~\cite[Sect.~6.3]{Xia_etal_2021}). \rev{Our work partly addresses their concern by studying scam patterns and scam clusters/networks. More specifically, the investigation of scam patterns identifies some typical ways scammer addresses got funded, e.g. by a common coordinator (in an OUT or IN/OUT star) or by other scammer addresses (in a simple chain or major flow), as well as how they handled their scam profits, e.g. sending the scam money to an aggregation address (in an IN or IN/OUT star) or funding other scammer addresses (in a simple chain or major flow). Moreover, the knowledge of scam clusters/networks show how scammers use multiple addresses to operate multiple related scams and perform wash trading among themselves.} 

Sharma \et~\cite{Sharma_etal_2023} studied in depth 760 NFT Rug Pulls from various NFT marketplaces and also examined what they called repeated Rug Pulls organized by the same scammer groups, and also network of scammers, which overlap with our concept of scam clusters/networks. The star and chain structures among the scammer addresses were also observed and briefly discussed, though not formalized. The reasonably small number of scams allowed some of their algorithms to involve manual inspection (e.g. Algorithm~1), which is impossible for our large datasets.
Wash trading was left for future study in their work.

\section{Functions and Events of ERC/BEP-20}
\label{app:functions_events}

Table~\ref{tab:token_standard} shows the set of common functions and events that ERC-20 tokens and BEP-20 tokens must follow.

\begin{table}[htb]
\centering
\setlength{\tabcolsep}{4pt}
\begin{tabular}{|c|c|p{4.3cm}|}
\hline
\textbf{Type}     &   \textbf{Signature}       &     \textbf{Description}    \\ \hline
\multirow{9}{*}{Method} & name()  & Getting name of the token (e.g., Dogecoin)\\
 & symbol() &  Get symbol of the token (e.g., DOGE)\\
 & decimals() & Get the number of decimals the token uses\\
 & totalSupply() & Get the total amount of the token in circulation\\
 & balanceOf() & Get the amount of token owned by given address\\
 & \textbf{transfer()} & Transfer amount of tokens to given address from message caller\\
 & \textbf{transferFrom()} & Transfer amount of tokens between two given accounts\\
 & approve() & Allow a \textit{spender} spend token on behalf of \textit{owner} \\
 & allowance() & Get amount that the \textit{spender} will be allowed to spend\\
\hline
\multirow{2}{*}{Event} & Transfer()  & Trigger when tokens are transferred, including zero value transfers.\\
 & Approval() & Trigger on any successful call to \textit{approve()}\\
 \hline
\end{tabular}
\vspace{5pt}
\caption{\textmd{Functions and events of the ERC-20 (Ethereum) and BEP-20 (BNB Smart Chain) standard. Among required functions, \texttt{transfer()} and \texttt{transferFrom()} are two basic functions for digital assets transferring.}}
\label{tab:token_standard}
\end{table}

\section{Scam Datasets Analysis}
\label{app:data_collection}

%The Rug Pull identification results indicate that Uniswap and Pancakeswap are overwhelmed with one-day Rug Pulls. 
We observed that only 4.6\% scammer addresses on Uniswap created more than one scam pools. The largest number of scam pools (162) was created by 0xee80ba78889acf579b6a8d503470b4a67cf77076. % 0xee80~\footnote{0xee80ba78889acf579b6a8d503470b4a67cf77076}. 
%This observation shows that a large proportion of scammers on Uniswap do not reuse their addresses to do the second round of scams. Instead, they use a new address to make the scams look unrelated and against detection tools relying on the guilt-by-association rule~\cite{Xia_etal_2021}. 
However, on Pancakeswap, nearly 20\% of scammer addresses %do not follow this scam campaign. They instead 
created multiple scam pools, accounting for more than 288,000 pools, corresponding to 61.2\% of all scam pools.
%The total of fraudulent exchange pools created by them is more than 288,000, accounting for 17\% total number of one-day Rug Pulls on this DEX. 
Especially, the scammer address 0x608756c184a0723077b0c10f97f4d054c9ee1c0f created more than 18,000 different scam pools in more than 3 years (the latest scam was created in November 2024). %without termination by authorities. 
This address generated several meme tokens following %along with global 
hot events to attract the attention of users. For example, during the US presidential election, this account created 16 \texttt{TRUMP} coins and listed on Pancakeswap. %Surprisingly, this address 
Noticeably, it used almost identical source code to create many tokens, which is a cost-effective way for scam mass production. 
We further investigated 4,002 scammer addresses on UniswapV2 and 28,345 scammers on PancakeswapV2 that created more than one scam pools. Using our \textit{AST-Jaccard score}, which bypasses standard ofuscation techniques such as adding comments and spaces, changing variable names, or reordering code (see Appendix~\ref{app:similarity_score}), to measure similarity among given contracts, we found that 77.6\% and 74.5\% of such addresses on Uniswap and Pancakeswap, respectively, deployed contracts with over 70\% similarity. Moreover, 1,971 Uniswap scammer addresses and 10,236 Pancakeswap scammer addresses reused almost the same contracts (100\% AST-Jaccard similarity) for all the deployed scam tokens. %repeatedly \red{Vague meaning: what does "repeatedly" mean? How many contracts are the same? Two identical contracts qualified as "repeatedly"?}. 
Especially, 0x2c1eb6ca34997f6601cffe791831ad6d5cb9e937 on Uniswap and 0x50cd702d4b11cf6acec84bba739a3fd3a460e4e4 on Pancakeswap deployed almost identical contract 150 times and 1,191 times, respectively.

%This may be because generating a large number of scam tokens from the same source code seems a cost-effective way to catch more investments and rapidly scale up their fraudulent activities. 
%To investigate whether this phenomenon appears popularly, we examined the similarity among token contracts that were created by individual scammer addresses. 
%To do that, we first collect available token contracts from scammers who have listed more than one token on exchanges.  As a result, 4,002 scammer addresses on Uniswap and 28,345 scammers on Pancakeswap we found met the conditions. Then, we apply \textit{AST-Jaccard score} (see Appendix~\ref{app:similarity_score}) to measure similarity among given contracts. Among found scammer addresses, 77.6\% of addresses on Uniswap and 74.5\% of scammer addresses on Pancakeswap deployed multiple contracts with over 70\% similarity. Moreover, 1,971 Uniswap scammer addresses and 10,236 Pancakeswap scammer addresses reused the same contract (100\% similarity) repeatedly. Especially, \textbf{e937}\footnote{0x2c1eb6ca34997f6601cffe791831ad6d5cb9e937} on Uniswap and \rev{\textbf{e4e4}\footnote{0x50cd702d4b11cf6acec84bba739a3fd3a460e4e4}} on Pancakeswap deployed the same contract 150 times and 1,191 times, respectively. 
%}

\section{Contract Similarity}
\label{app:similarity_score}
We discuss in this appendix the preprocessing step and the construction of a new contract similarity score that can bypass a number of obfuscation techniques when comparing token contracts.

Popular string metrics used for measuring the source code similarity between two contracts include
Levenshtein (edit) distance~\cite{levenshtein1966binary}, longest common subsequences~\cite{paterson1994longest}, and Damerau-Levenshtein distance~\cite{brill2000improved}. However, such metrics are not only expensive to compute over a large number of contract pairs but also susceptible to simple code obfuscation techniques. For example, to reduce the similarity score between two contract clones, scammers can employ techniques like (M1) comments and spaces insertion/deletion, (M2) identifier/variable names modification, or (M3) code reordering to increase the edit distance arbitrarily. To tackle these, we propose a token-based similarity score call AST-Jaccard score, which leverages source code's AST, a collision-resistant hash function, and Jaccard similarity to compare scam contracts. 

More specifically, as AST keeps the syntax and sematic information of the source code while ignoring non-essential information such as spaces, comments, specific function and variable names, using AST instead of the source code itself tackles (M1) and (M2). To bypass (M3) code reordering, e.g. swapping functions or variables around, the new score uses the Keccak-256 hashing algorithm to hash all tokens/components in the AST of each source code and group them into a set. As the order of elements is not important for a set, even when the variables and function were rearranged, the Jaccard similarity on sets still picks up the right overlapping ratio between the two sets corresponding to two contract codes. 
To further improve the accuracy, we also applied a preprocessing steps to remove all common libraries and interfaces in every source code, which account for 40\%-50\% of the code itself and would interfere with the score. More details about the preprocessing and the AST-Jaccard score can be found in Appendix~\ref{app:similarity_score}.

\textbf{Common Libraries and Interfaces Removal.} Fungible tokens are implemented by following the common standard (e.g ERC-20 or BEP-20 interface) so these tokens often contains the common codes from the interface, inflating their similarity. The similarity continues to increase if they used the same common libraries such as \texttt{SafeMath},  \texttt{Address}, or \texttt{Ownable}. Hence, to avoid this inflation, we remove all common libraries and interfaces from tokens before measuring their similarities. To that end, we first collect all common libraries on OpenZeppelin~\cite{openzeppelin}, an open-source framework for writing secure and scalable smart contracts. Then we remove all these libraries and interfaces from token's contracts before doing tokenisations.

\textbf{Code Tokenisation.} Our approach first parses the source code of a contract to an AST, which keeps the syntax and semantic information of a contract only. Subsequently, we extract a token from a type of each node in the AST (e.g., \texttt{Mapping}, \texttt{IfStatement}, \texttt{BinaryOperation}, \texttt{Assignment}). In this manner, some inessential information such as spaces, comments, function names, variable names and their values will be eliminated. Thus, we can solve M1 and M2. We run an extraction for each component in a contract, i.e. \textit{state variables}, \textit{functions}, \textit{events}, and \textit{modifiers}. Notably, code lines in a component will be parsed to an array of tokens. For example, we can get correspondingly an array of \textit{IfStatement}, \textit{BlockIdentifier}, \textit{IndexAccess}, \textit{BinaryOperation}, and \textit{Literal} tokens from the \texttt{if} condition ``\texttt{if(sender[i] == `0x0d83a1')}''. In other words, each array of tokens marks the semantic information of each component in a contract.

\textbf{Integration of AST, hash function, and Jaccard similarity Score.} Next, we try to overcome M3 at a contract level, i.e. to
identify a clone of a contract even if the scammer has reordered different components in the contract. 
%Our approach does not focus on the component level because we are unsure whether different code line arrangements in a component will give out the same working logic or not (e.g., a code line is inside or outside a condition block) while our main goal is to build a ground truth dataset for a Trapdoor detection problem. Therefore, two contracts are totally the same if they contain the same list of components. 
To improve our algorithm's performance, we concatenate all tokens in the token array of a component as a string and apply a Keccak-256 hash algorithm. As a result, each component in the contract will be represented as a unique hash (256 bits). Next, we employ the classic Jaccard index~\cite{jaccard} for two corresponding sets of hashes as in \eqref{jaccard}, 
where $J(A, B)$ is the similarity between two contracts, $A$ is a list of hashes of the first contract and $B$ is a list of hashes of the second contract. 
  \begin{equation}
    \label{jaccard}
         J(A,B)=\frac{|A \cap B|}{|A \cup B|}=\frac{|A \cap B|}{|A| + |B| - |A \cap B|}.
    \end{equation}

The overall algorithm is presented below.

\floatname{algorithm}{Procedure}
\begin{algorithm}[htb]
\caption{\textbf{tokenization}($\textsf{token}$)}
\begin{algorithmic}[1]
        \STATE $\textsf{ast} \gets $ \textbf{parseAST}($\textsf{token.contract}$)
        \STATE  $\textsf{components} \gets $\textbf{removeCommon}($\textsf{ast.contract.nodes}$)
        \STATE  $\textsf{hashes} \gets [] $
        \FOR{$ \textsf{j} \gets 0$ to $\textsf{components.length}$}
            \STATE $\textsf{syntactic\_tonkens}\gets $ \textbf{extractTokens}($\textsf{components[i]}$)
            \STATE $\textsf{token\_str} \gets $ \textbf{concat}($\textsf{syn\_tonkens}$)
            \STATE $\textsf{hashes[i]}\gets$ \textbf{keccak256}($\textsf{token\_str}$)
        \ENDFOR
        \RETURN $\textsf{hashes}$
\end{algorithmic}
\end{algorithm}
\vspace{-5pt}

\floatname{algorithm}{Procedure}
\begin{algorithm}[htb]
\caption{\textbf{similarity}$(\textsf{tokenA}, \textsf{tokenB})$}
\begin{algorithmic}[1]
    \STATE $\textsf{hashesA} \gets $\textbf{tokenization}($\textsf{tokenA}$)
    \STATE $\textsf{hashesB} \gets $\textbf{tokenization}($\textsf{tokenB}$)
    \STATE $\textsf{intersection} \gets \textsf{hashesA}$ $\&$ $\textsf{hashesB}$
    \STATE $\textsf{union} \gets  \textsf{hashesA}$ $|$ $\textsf{hashesB}$
    \STATE $\textsf{jaccard\_index} \gets \textsf{intersection.length}$ / $\textsf{union.length}$
    \RETURN $\textsf{jaccard\_index}$
\end{algorithmic}
\end{algorithm}
\vspace{-5pt}

% \begin{figure}[htb]
%     \centering
%     \includegraphics[width=0.4\textwidth]{}
%     \caption{\textmd{\textit{Intra-cluster} similarities on Uniswap.}}
%     \label{fig:intra_uni}
% \end{figure}

% \begin{figure}[htb]
%     \centering
%     \includegraphics[width=0.4\textwidth]{}
%     \caption{\textmd{\textit{Intra-cluster} similarities on Pancakeswap.}}
%     \label{fig:intra_pan}
% \end{figure}

% \begin{figure}
%     \centering
%     \includegraphics[width=0.4\textwidth]{}
%     \caption{\textmd{\textit{Inter-cluster} similarities distribution on Uniswap.}}
%     \label{fig:inter_uni}
% \end{figure}

% \begin{figure}
%     \centering
%     \includegraphics[width=0.4\textwidth]{}
%     \caption{\textmd{\textit{Inter-cluster} similarities on Pancakeswap}}
%     \label{fig:inter_pan}
% \end{figure}

\section{Scam Chains: More Analysis}
\label{app:scam_chain}

The longest simple scam chain on Uniswap starts at the address 0x79daa9236e6825f023ab3ebd2cecfe94b48789d1 and ends at the address 0x3ba3a256cc8b13cdba3ad6a34352e0cfb51bbbe0, over a period of 26 days with 274 scammer addresses. Note that \textbf{89d1} has another scammer address \textbf{7ec2} as its largest funder. However, the chain doesn't include \textbf{7ec2} since \textbf{89d1} is not its largest beneficiary. 

The longest simple scam chain on Pancakewap starts at the address 0x39b81c24b8aaf10182d63706e940b8994859866d and ends at the address 0x16fdd937c56c4f82843487a0d28c34b8ae100d49, over a period of 38 days with 713 scammer addresses. 
Although the first address was funded by two other scammer addresses, the current definition of the chain doesn't capture any of these addresses. The reason is that it was funded twice to run two separate scams, and the first transfer had a smaller amount, hence is not a maximum in-transaction  (violating (C1)), whereas the larger transfer happened after the first scam, and hence was ignored (violating (C2)). Both the definitions of simple chain and major flow assume that scam funding transfers all happened before the first scam. Thus, they miss the case when the scammer address performed multiple scams and was funded right before the start of each scam. Note that all such cases will still be captured by a scam cluster because a cluster takes into account the transfers without considering their timings.   

We also measured the AST-Jaccard similarity scores for token contracts in scam chains that have at least two contracts available on Etherscan and BSCscan. On average, Uniswap chains have intra-chain similarity score of 83\% and inter-chain similarity score of 23\%. Moreover, 1,569 (49\%) of such Uniswap chains have 100\% intra-chain similarity score, and 3,367 (97\%) have inter-chain similarity scores at most 30\%. On average, Pancakeswap chains have intra-chain score of 80\% and inter-chain 21\%. Moreover, 3,687 (33\%) of such chains have intra-chain similarity score of 100\%, and 11,888 (100\%) has inter-chain similarity scores at most 30\%.  

\vspace{-10pt}
\section{Scam Stars: Detection \& Analysis}
\label{app:star}

\textbf{Description of} \texttt{StarDetector}. 
First, for each scammer address $s$, the algorithm identifies its \textit{major funder} as the \textit{in-neighbour} $f$ that satisfies the following conditions: F1) $f$ funded at least 100\% of the cost of the first scam carried out by $s$ with a single in-transaction before the first scam, and F2) $f$ must be the (strictly) largest funder of $s$. Similarly, \texttt{StarDetector} identifies the \textit{major beneficiary} of $s$ as the \textit{out-neighbour} $b$ that satisfies the following conditions: B1) $s$ transferred at least $p=90\%$ of its last scam's revenue to $b$ in a single out-transaction after the last scam, and B2) the corresponding out-transaction must be the largest out-transaction after the last scam was conducted. 

Next, \texttt{StarDetector} identifies the star type(s) that $s$ potentially belongs to as follows: if $f \equiv b$ then $s$ may belong to a single IN/OUT-star with center $f$; otherwise, if $f$ never received any fund from $s$ then $s$ may belong to an OUT-star with center $f$, and if $b$ never funded $s$ then $s$ may belong to an IN-star with center $b$. Note that it is possible that $s$ belongs to both an IN-star and an OUT-star (with different centers). For each star type and the corresponding potential center $c=f$ or $c=b$ or $c=f\equiv b$, the algorithm then examines each of $c$'s neighbors and checks whether $c$ is also their potential star center of the corresponding type. If there are $n\geq 5$ neighbors of $c$ (including the original scammer address $s$), then \texttt{StarDetector} returns the corresponding star, and then repeats with other addresses. The algorithm also keeps track of the star type each address already belongs to for avoiding redundant work.

The largest star on Uniswap is an IN-star with 585 has scammer addresses, all transferred their scam revenues to the center/depositor address  0x4a27bae40a12c0cfe8ed23d18314d96676b45f3a, which frequently deposited to the CEX Coinbase. This star operated over a period of 181 days, with a total fund-in of 1,314 ETH. Each scammer address was funded by Coinbase, performed exactly one scam and transferred fund to the center. The longest periods for the OUT and IN/OUT stars on Uniswap are almost the same (identical after rounding) because the corresponding stars, which have the same center, performed their last set of scams on the same day in June 2023.

The largest star on Pancakeswap is an IN/OUT-star with 9,012 scammer addresses, all received fundings and transferred scam revenues to the center 0xd959faa53914d150e695caa32ee5afe5048cbe70.
This star operated over a period of 178 days, with 592,168 BNB coming out of the center and 939,528 BNB coming in to the center. One satellite address is 0x7e546054822a57c6cd0beb4050353a1e4c0122e4, which ran only one scam on 3/10/2023. The total number of scam pools created was 9,183, slightly larger than the number of scammer addresses, which means that some addresses created more than one scam pools.

We repeated the detection of stars with a few different thresholds $p = 80$-$95\%$, and found that the statistics (see Tables~\ref{tab:more_stats_star_Uniswap} and~\ref{tab:more_stats_star_Pancakeswap}) are mostly similar to when $p=90\%$. %varied the Star patterns thresholds - see tables:

\begin{table}[htb!]
\centering
\setlength{\tabcolsep}{1pt}
\begin{tabular}{|l|r|r|l|l|l|l|l|}
\hline
\textbf{Type} & \multicolumn{1}{l|}{\textbf{$p$}} & \multicolumn{1}{l|}{\textbf{\#Star}} & \textbf{Size} & \textbf{FundIn} & \textbf{FundOut} & \textbf{Period} & \textbf{\#Scams} \\ \hline
IN            & 80\%                             & 1622                                 & 574;19        & 17597;108        &                   & 925;97          & 574;20           \\ \hline
I/O       & 80\%                             & 71                                   & 150;15        & 22447;572        & 22734;555         & 476;48          & 150;16           \\ \hline
OUT           & 80\%                             & 58                                   & 62;9          &                  & 19646;485         & 476;55          & 62;11            \\ \hline
IN            & 85\%                             & 1603                                 & 570;19        & 17597;109        &                   & 925;96          & 570;20           \\ \hline
I/O       & 85\%                             & 72                                   & 157;15        & 22447;564        & 22734;547         & 476;54          & 157;16           \\ \hline
OUT           & 85\%                             & 60                                   & 58;9          &                  & 19646;472         & 476;54          & 58;11            \\ \hline
IN            & 90\%                             & 1575                                 & 585;19        & 17597;104        &                   & 925;96          & 585;20           \\ \hline
I/O       & 90\%                             & 73                                   & 159;15        & 22447;557        & 22734;540         & 476;56          & 159;16           \\ \hline
OUT           & 90\%                             & 61                                   & 66;10         &                  & 19646;465         & 476;56          & 66;11            \\ \hline
IN            & 95\%                             & 1574                                 & 556;19        & 17597;109        &                   & 925;96          & 556;20           \\ \hline
I/O       & 95\%                             & 72                                   & 153;15        & 22447;560        & 22734;543         & 476;54          & 153;16           \\ \hline
OUT           & 95\%                             & 60                                   & 63;9          &                  & 19646;470         & 476;56          & 63;11            \\ \hline
\end{tabular}
\caption{\textmd{Statistics for \textbf{Uniswap} scam stars when the threshold $p$ varies between $80\%$ and $95\%$.}}
\label{tab:more_stats_star_Uniswap}
\end{table}

\begin{table}[htb]
\centering
%\footnotesize
\setlength{\tabcolsep}{0.5pt}
\begin{tabular}{|l|l|l|l|l|l|l|l|}
\hline
\textbf{Type} & \multicolumn{1}{l|}{$p$} & \multicolumn{1}{l|}{\textbf{\#Star}} & \textbf{Size} & \textbf{F. In} & \textbf{F. Out} & \textbf{Period} & \textbf{\#Scams} \\ \hline
IN            & 80\%                             & 1310                                 & 977;21        & 5734;79          &                   & 1150;144        & 1008;27          \\ \hline
I/O       & 80\%                             & 266                                  & 9012;56       & 939528;4052      & 592168;2548       & 1025;51         & 9183;63          \\ \hline
OUT           & 80\%                             & 351                                  & 1900;28       &                  & 6513;98           & 1184;175        & 2824;43          \\ \hline
IN            & 85\%                             & 1309                                 & 977;21        & 5734;79          &                   & 1150;146        & 1008;27          \\ \hline
I/O       & 85\%                             & 256                                  & 9012;58       & 939528;4209      & 592168;2646       & 1025;52         & 9183;65          \\ \hline
OUT           & 85\%                             & 341                                  & 1900;28       &                  & 6513;101          & 1184;178        & 2824;44          \\ \hline
IN            & 90\%                             & 1301                                 & 977;22        & 5734;79          &                   & 1150;146        & 1008;27          \\ \hline
I/O       & 90\%                             & 251                                  & 9012;59       & 939528;4293      & 592168;2699       & 1025;51         & 9183;66          \\ \hline
OUT           & 90\%                             & 339                                  & 1900;28       &                  & 6513;101          & 1184;180        & 2824;44          \\ \hline
IN            & 95\%                             & 1293                                 & 977;22        & 5734;79          &                   & 1150;146        & 1008;27          \\ \hline
I/O       & 95\%                             & 251                                  & 9012;59       & 939528;4293      & 592168;2699       & 1025;51         & 9183;66          \\ \hline
OUT           & 95\%                             & 338                                  & 1900;28       &                  & 6513;101          & 1184;181        & 2824;44          \\ \hline
\end{tabular}
\caption{\textmd{Statistics for \textbf{Pancakeswap} scam stars when the threshold $p$ varies between $80\%$ and $95\%$.}}
\label{tab:more_stats_star_Pancakeswap}
\end{table}
%\footnotetext{\rev{The first address is 0x39b81c24b8aaf10182d63706e940b8994859866d.}}

We also measured the AST-Jaccard similarity scores for token contracts in scam stars (with $p=90\%$) that have at least two contracts available on Etherscan and BSCscan. On average, Uniswap stars have intra-star similarity score of 60\% and inter-star similarity score of 27\%. Moreover, 505 (35\%) of such Uniswap stars have at least 70\% intra-star similarity score, and 910 (60\%) have inter-star similarity scores at most 30\%. On average, Pancakeswap stars have intra-star score of 65\% and inter-star 24\%. Moreover, 976 (44\%) of such stars have intra-star similarity score of 100\%, and 1,858 (82\%) have inter-star similarity scores at most 30\%.  

\section{Overlaps of Scam Patterns}
\label{app:scam_patterns_more_analysis}

\rev{Setting $p=90\%$, we found that 693 addresses on Uniswap (0.48\%) and 2,917 %3,123 
addresses on Pancakeswap (1.2\%) %(1.31\%) 
belong to both a chain and a star.
In the Uniswap scammer dataset, 16,367 (11.2\%) addresses belong to both a chain and a major flow, and 785 (0.5\%) addresses belong to both a star and a major flow. For the Pancake dataset, 49,170 (20.6\%) addresses belong to both a chain and a major flow, and 3,469 (1.5\%) addresses belong to both a star and a major flow. It is clear that most addresses that belong to major flows also belong to simple chains. This is because most of the maximal major flows have width two and are simple chains (but might not be maximal chains). There are 210 major flows (4\% of all flows) on Uniswap and 354 major flows (2\%) on Pancakeswap that are not simple chains. This suggests that simpler funding patterns were more widely used perhaps due to easier address management. The existence of more complicated major flows, although a small fraction, indicates that some scammers were willing to go the extra miles to make their scam operations less obvious and potentially harder to trace. We expect that scam patterns will evolve and get more and more complex over time to elude tracking by scam detection tools.}

\section{Major Flow: Detection \& Analysis}
\label{app:majority_flow}

We present below the main steps of \mfd, which identifies all major flows within a given set of scammer addresses.

\begin{algorithm}
\begin{flushleft}
\textbf{Algorithm} \mfd$(S)$
\vspace{1pt}
\hrule
\vspace{1pt}
\noindent \textbf{Step 1.} Construct $T_F(s)$, $T_B(s)$, $F(s)$, and $B(s)$ following Def.~\ref{def:major_transactions} for all $s \in S$. Remove all $s$ from $S$ where both $T_F(s)$ and $T_B(s)$ do not exist, or $F(s)\not \subseteq S$ and $B(s)\not \subseteq S$.   

\noindent \textbf{Step 2.} Construct all minimal major flows $\{G_i = (V_i,T_i)\}_{i=1}^m$ as follows. First, let $\T$ be the set collection
\begin{multline*}
\T \triangleq \{T_F(s)\colon s \in S, \varnothing \neq T_F(s) \subseteq S\}\\ \cup \{T_B(s)\colon s \in S, \varnothing \neq T_B(s) \subseteq S\}.
\end{multline*}
Initially, all sets in $\T$ are \textit{unused}.
For each iteration $i=1,2,\ldots$, set $T_i \triangleq T_F(s)$ or $T_B(s)$ where $T_F(s)$ or $T_B(s)$ is an unused set in $\T$. Keep expanding $T_i$ by merging it with other unused sets in $\T$ using the following merging rule: if $T_i$ contains $T_F(s')$ then merge it with $T_B(f)$ for all $f \in F(s')$, whereas if it contains $T_B(s')$ then merge with $T_F(b)$ for all $b \in B(s')$, and label those sets $T_B(f)$ or $T_F(b)$ as \textit{used}. Stop when no further expansion is possible. Let $V_i \triangleq \cup_{t=(u,v) \in T_i}\{u,v\}$, which consists of all the scammer addresses involved in the transactions in $T_i$, return $G_i=(V_i,T_i)$, and repeat for the next iteration $i+1$ until all sets in $\T$ have been used. %This step can be implemented efficiently using a FIFO queue.  

\noindent \textbf{Step 3.} Build an auxiliary graph $\G=(\V,\E)$ with the vertex set $\V \triangleq \{G_1,G_2,\ldots,G_m\}$ and edge set $\E \triangleq \{(G_i,G_j)\colon i \neq j, V_i\cap V_j \neq \varnothing\}$. Find all the connected components $C_1,\ldots,C_k$ of $\G$ and return $\{H_j\triangleq \cup_{G_i \in V(C_j)} G_i\colon j = 1,\ldots,k\}$ as maximal major flows in $S$.
\end{flushleft}
\end{algorithm}

%\section{Proof of Theorem~\ref{thm:majority_flow}}
We now provide a proof for Theorem~\ref{thm:majority_flow}, which guarantees the correctness and efficiency of \mfd. 

\begin{proof}[Proof of Theorem~\ref{thm:majority_flow}]
\textbf{Proof of part (a)}. %Let $G=(V,T)$ and $G'=(V',T')$ be two major flows in $S$. We first prove that the intersection $\Gi=(\Vi, \Ti \triangleq T \cap T')$, where $\Ti \neq \varnothing$ and $\Vi \triangleq \cup_{t = (u,v) \in \Ti} \{u,v\}$ is a major flow. For every $s\in \Vi$, if $I_\Gi(s)\neq \varnothing$, then $I_G(s)\neq \varnothing$ and $I_{G'}(s)\neq \varnothing$. Hence, due to (P1), $I_G(s) = I_{G'}(s) = T_F(s)$, which implies that $I_\Gi(s) = T_F(s)$. Similarly, if $O_\Gi(s) \neq \varnothing$ then we can also deduce that $O_\Gi(s) = T_B(s)$. This proves that $\Gi$ satisfies (P1). 
%It is clear that as both $G$ and $G'$ are weakly connected, 
Let $\Gu = (\Vu = V\cup V',\Tu = T \cup T')$ be the union of $G$ and $G'$, and assume that there exists $v \in V\cap V' \neq \varnothing$. Clearly, $\Gu$ satisfies (P2) because there is a path between any two distinct vertices $u, w$ in $\Gu$, which is the union of the path from $u$ to $v$ and then the path from $v$ to $w$ in the individual clusters (if $u\equiv v$ or $w\equiv v$ then one of such path is empty). Moreover, for any $s \in \Vu$, as both $G$ and $G'$ satisfy (P1), we have $I_\Gu(s) = I_G(s)\cup I_{G'}(s) = T_F(s)$ if either $I_G(s)$ or $I_{G'}(s)$ is nonempty. A similar conclusion holds for $O_\Gu(s)$. Thus, $\Gu$ satisfies (P1) as well.

\textbf{Proof of Part (b)}. We aim to show that each $G_i$, $i=1,\ldots,m$, is a sub-graph of a minimal major flow, and is a major flow itself, hence coincides with a minimal cluster. As the algorithm goes through every funder-beneficiary transactions within $S$ (captured by the sets in $\T$), Step~2 of the algorithm must return all the minimal major flows in $S$. 

First, suppose $T_i = T_F(s)$ or $T_i = T_B(s)$ initially for some $s \in S$ in Iteration $i$, $i = 1,\ldots,m$. Let $G(s)=(V(s),T(s))$ be the \textit{minimal} major flow containing $T_F(s)$ or $T_B(s)$, respectively. Clearly, $G_i = (V_i,T_i)$ is a subgraph of $G(s)$. We prove in an inductive manner that $G_i$ is always a subgraph of $G(s)$ after each merging step. Indeed, assume that $G(s)$ currently contains $G_i$ as a sub-graph. If $T_i$ contains $T_F(s')$ for some $s'\in S$ then it is merged with $T_B(f)$ for all $f \in F(s')$. For such $f$, since $(f,s') \in T_F(s') \subseteq T(s)$, we have $(f,s') \in O_{G(s)}(f)$, which implies that $O_{G(s)}(f)\neq \varnothing$, and hence, $O_{G(s)}(f)=T_B(f)$ due to (P1). Thus, $G(s)$ contains $G_i$ as a subgraph after $T_i$ is merging with $T_B(f)$ for every $f\in T_F(s')$. The same conclusion holds if $T_i$ contains $T_B(s')$ and gets merged with $T_F(b)$ for all $b \in B(s')$. Thus, $G_i$ is a subgraph of a minimal major flow.

It remains to show that $G_i$'s generated at Step~2 of the algorithm are major flows, $i=1,\ldots,m$. We first prove that $G_i$ satisfies (P1). Assume that $I_{G_i}(s) \neq \varnothing$ for some $s \in V_i$, we need to show that $I_{G_i}(s) \equiv T_F(s)$ (the case $O_{G_i}(s) \neq \varnothing$ can be treated similarly). Considering one edge $(f,s) \in I_{G_i}(s)$. As $G_i$ is constructed by adding either $T_F(\cdot)$ or $T_B(\cdot)$ one at a time, the existence of $(f,s)$ means that $G_i$ either already contains $T_F(s)$ or $T_B(f)$. In the latter, according to the merging rule, as $s \in B(f)$, $G_i$ will be further expanded by adding the edges from $T_F(s)$. Therefore, in either cases, $G_i$ would contain $T_F(s)$ eventually, hence satisfying (P1). 

\textbf{Proof of Part (c)}. We now show that Step~3 in the algorithm \mfd{} returns all maximal major flows in $S$.

First, due to Part (a), $H_j \triangleq \cup_{G_i \in V(C_j)} G_i$ is a major flow for all $j=1,\ldots,k$ as it is the union of major flows. Note that $H_i$ can be formed by uniting two major flows that have at least one vertex in common a number of times. It remains to prove that $H_j$ is maximal. For the sake of contradiction, assume that there exists another major flow $H'_j$ that contains $H_j$ as a proper subgraph. As $H'_j$ is weakly connected due to (P2), there must be an edge $t = (f,b)$ in $H'_j$ but not in $H_j$, with at least one endpoint, say $f$, belongs to $V(H_j)$ (otherwise there won't be a path from vertices in $V(H_j)$ to vertices in $V(H'_j)\setminus V(H_j)$). Let $G_{i'}=(V_{i'},T_{i'})$ be the minimal major flow containing $t$, then $f \in V_{i'} \cap V(H_j) \neq \varnothing$, which means that $G_{i'}$ should belong to the connected component $C_j$. This implies that $t \in G_{i'} \subseteq H_j$, which contradicts our assumption about $t$. Thus, $H_j$ must be maximal. Finally, as the algorithm goes through all possible funder-beneficiary transactions in $S$, all major flows will be found.

We now discuss the complexity of the algorithm.
In Step~1, to construct $T_F(s)$ and $T_B(s)$ for all $s\in S$, the in- and out-transactions of $s$ must be sorted first, which dominates the running time of this step. Hence, Step~1 requires $O\big(|T(S)|\log(|T(S)|)\big)$ operations. In Step~2, each scam-funding transaction $t=(f,b)$, where $f,b\in S$, is examined at most twice, as an out-transaction of $f$ and an in-transaction of $b$ when expanding the sets $T_i$. Therefore, it takes $O(|T(S)|)$ operations to complete Step~2. In Step~3, the running time is dominated by the time used to build the auxiliary graph $\G$, which requires $O(|S||T(S)|^2)$ operations in the worst case using the most straightforward method with any optimization. Indeed, the algorithm needs to examine all pairs of $(G_i,G_j)$, $1\leq i < j \leq m$, where $m \leq |T(S)|$, and for each pair, the intersection $V_i \cap V_j$ can be found in $O(|S|)$ steps. 
\end{proof}

We observe that varying $p=80$-$95\%$ yields similar statistics for major flows in our datasets (see Table~\ref{tab:majority_flow_uniswap} and Table~\ref{tab:majority_flow_pancake}). 

\begin{table}[htb]
\setlength{\tabcolsep}{4pt}
\centering
\begin{tabular}{|r|r|l|l|l|l|}
\hline
\multicolumn{1}{|l|}{$p$} & \multicolumn{1}{l|}{\textbf{\#Flows}} & \textbf{Size} & \textbf{Width} & \textbf{Fund In} & \textbf{Fund Out} \\ \hline
95\%                             & 5,145                                    & 156;4      & 7;3       & 1,645;33     & 1,797;34      \\ \hline
90\%                             & 5,298                                    & 156;4      & 7;3       & 1,645;32     & 1,797;34      \\ \hline
85\%                             & 5,268                                    & 202;4      & 7;3       & 1,645;32     & 1,797;33      \\ \hline
80\%                             & 5,267                                    & 212;4      & 9;3       & 1,645;32     & 1,797;33      \\ \hline
\end{tabular}
\vspace{5pt}
\caption{\textmd{Statistics for \textit{maximal major flows} found in our
\textbf{Uniswap} scammer dataset for $p=80$-$95\%$. Rounded maximum and average values are reported.}}
\label{tab:majority_flow_uniswap}
\vspace{-10pt}
\end{table}

\begin{table}[htb]
\setlength{\tabcolsep}{4pt}
\begin{tabular}{|r|r|l|l|l|l|}
\hline
\multicolumn{1}{|l|}{$p$} & \multicolumn{1}{l|}{\textbf{\#Cluster}} & \textbf{Size} & \textbf{Width} & \textbf{Fund In} & \textbf{Fund Out} \\ \hline
95\%                             & 16,344                                   & 820;4      & 8;3       & 1,227;12     & 2,626;13      \\ \hline
90\%                             & 16,467                                   & 820;4      & 6;3       & 1,099;12     & 2,626;12      \\ \hline
85\%                             & 16,542                                   & 820;4      & 6;3       & 1,099;12     & 2,626;12      \\ \hline
80\%                             & 16,623                                   & 820;4      & 6;3       & 1,099;12     & 2,626;12      \\ \hline
\end{tabular}
\vspace{5pt}
\caption{\textmd{Statistics for \textit{maximal major flows} found in our
\textbf{Pancakeswap} scammer dataset for $p=80$-$95\%$. Rounded maximum and average values are reported.}}
\label{tab:majority_flow_pancake}
\end{table}

We also measured the AST-Jaccard similarity scores for token contracts in major flows (with $p=90\%$) that have at least two contracts available on Etherscan and BSCscan. On average, Uniswap flows have intra-flow similarity score of 84\% and inter-flow similarity score of 23\%. Moreover, 1,917 (51\%) of such flows have 100\% intra-flow similarity, and 2,694 (72\%) flows have intra-flow similarity scores at least 80\%. Also, 3,828 (93\%) of such flows have inter-flow similarity scores at most 30\%. On average, Pancakeswap flows have intra-flow similarity score of 77\% and inter-flow similarity score of 21\%. Moreover, 3,885 (30\%) of such flows have intra-flow similarity 100\%, and 8,372 (65\%) have intra-flow similarity scores at least 80\%. On the other hand, 13,928 (100\%) flows on Pancakeswap have inter-flow similarity scores at most 30\%.

\section{Scam Clusters: More Analysis}
\label{app:scam_clusters}

The top five clusters (ranked by the number of scammer addresses in them) with their total cluster-aware scam profits and the number of scam patterns founded in them are given in Table~\ref{tab:cluster_stats}.

\begin{table}[htb]
\begin{tabular}{|c|c|c|c|c|}
\hline
\textbf{ID} & \textbf{\#Scams} & \textbf{Size} & \textbf{\#Patterns} & \textbf{\begin{tabular}[c]{@{}c@{}}Profit\\ ETH/BNB\end{tabular}} \\ \hline
UNI-1               & 8,500            & 4,155                & 700                & 29,711      \\ \hline
UNI-2                & 1,602              & 646                  & 204                  & 16,307           \\ \hline
UNI-3               & 549              & 538                  & 166                  & 170          \\ \hline
UNI-4                & 433               & 432                   & 33                   & 456          \\ \hline
UNI-5                & 503               & 395                   & 53                  & 1,154          \\ \hline
CAKE-1             & 57,145                                & 23,399                                    & 4,500                                    & 5695                                                                                \\ \hline
CAKE-2             &1,330                                    & 1,327                                        & 447                                        & 461                                                                                 \\ \hline
CAKE-3             & 2,439                                    & 1,295                                        & 42                                        & 65                                                                                \\ \hline
CAKE-4             & 1,171                                    & 1,154                                        & 61                                        & 5                                                                                 \\ \hline
CAKE-5             & 3,550                                    & 1,135                                        & 153                                        & 48 \\ \hline
\end{tabular}
\vspace{5pt}
\caption{\textmd{The statistics of the five largest scam clusters on \textbf{UniswapV2} and \textbf{PancakeswapV2}. The ``Patterns'' column refers to the number of patterns (chains, stars, major flows) found in the cluster, whereas the ``Profit'' column refers to the total cluster-aware profit of all the pools in the cluster.}}
\label{tab:cluster_stats}
\vspace{-10pt}
\end{table}

The distributions of intra-cluster and inter-cluster similarities of Uniswap and Pancakeswap scam clusters are given in Figs.~\ref{fig:intra_uni},~\ref{fig:intra_pan},~\ref{fig:inter_uni}, and~\ref{fig:inter_pan}. In summary, scam contracts within the same clusters are highly similar, while scam contracts from different clusters are not similar, which strengthens our prediction that scammer addresses from the same cluster are likely possessed by the same scammer/organization, which tends to use more similar contracts, obtained potentially by cloning and obfuscating a few original ones. 

 \begin{figure}[htb!]
    \centering
    \includegraphics[width=0.45\textwidth]{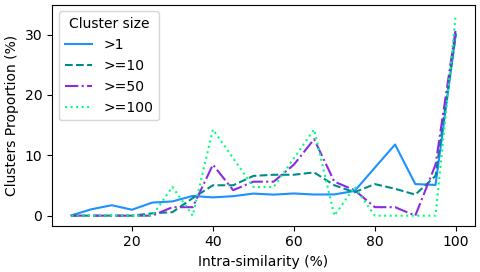}
    \caption{\textmd{\textit{Intra-cluster} similarities for Uniswap scam clusters.}}
    \label{fig:intra_uni}
    \vspace{-10pt}
\end{figure}

\begin{figure}[htb!]
    \centering
    \includegraphics[width=0.45\textwidth]{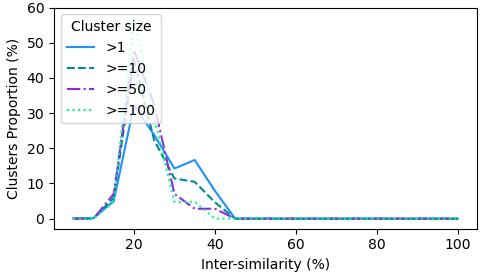}
    \caption{\textmd{\textit{Inter-cluster} similarities for Uniswap scam clusters.}}
    \label{fig:inter_uni}
    \vspace{-10pt}
\end{figure}

% \begin{figure}[htb!]
%     \centering
%     \includegraphics[width=0.5\textwidth]{cluster_lvl_pan_intra.png}
%     \caption{\textmd{\textit{Intra-cluster} similarities for Pancakeswap scam clusters.}}
%     \label{fig:intra_pan}
% \end{figure}

% \begin{figure}[htb!]
%     \centering
%     \includegraphics[width=0.5\textwidth]{cluster_lvl_pan_inter.png}
%     \caption{\textmd{\textit{Inter-cluster} similarities for Pancakeswap scam clusters.}}
%     \label{fig:inter_pan}
% \end{figure}

For scammer addresses that do not belong to any cluster, we manually inspected a random subset of them, which consists of addresses ending with `abc' or `000'. We found that most of them received funds from DeFi services including CEX, DEX, mixers, and bridges, such as Binance, Coinbase, Kucoin, FixedFloat, Bybit, MEXC, ChangeNOW, DLN, XY Finance, and transferred the funds to these services after scamming. Some also interacted with those services indirectly via intermediate, non-scammer addresses.

\section{Scam Networks \& True Scam Profits}
\label{app:scam_network}

\subsection{Multi-Hop Non-Scammer Wash Traders}
\label{app:multi_hop_non_scammer_WT}

\rev{We first discuss our observation of wash trader addresses that are non-scammer and multiple hops away from the scammer address.

\begin{figure}[t]
    \centering
    \includegraphics[scale=1]{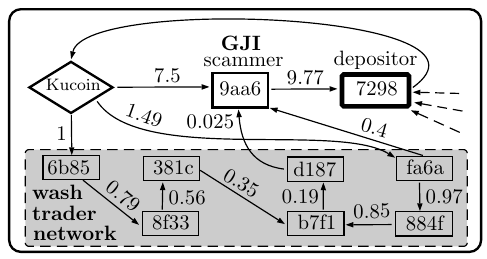}
    \caption[abc]{\textmd{The wash traders of the scam token \texttt{GJI} on Uniswap created by the scammer address (ending with) \textbf{9aa6} (its full address is 0x19b98792e98c54F58C705CDDf74316aEc0999AA6) never received fund directly from it. Several are multiple hops away and funded by a CEX (KuCoin) or among themselves. None of the wash traders are scammer addresses.}}
    \label{fig:wt_9aa6}
    %\vspace{-10pt}
\end{figure}
%\footnotetext{0x19b98792e98c54F58C705CDDf74316aEc0999AA6}

Note that \textit{1-hop} wash-trader addresses was observed in Xia~\et~\cite{Xia_etal_2021}, which received direct ETH transfers from scammer addresses before swapping scam tokens. 
A recently developed wash-trading detection tool by SolidusLab~\cite{soliduslabs_washtrading_blog, soliduslabs_washtrading_tool} can only detect \textit{0-hop} wash traders (self wash trading).
However, we noticed in our datasets wash traders that were multiple hops away from the scammer addresses and never interacted directly with them. We illustrate one such case in Example~\ref{ex:GJI}.} 

\begin{example}
    \label{ex:GJI}
    The scammer address (ending with) \textbf{9aa6}, after receiving 7.5 ETH from the CEX KuCoin, created the scam token \texttt{GJI}, and added 7 ETH and $10^{12}$ \texttt{GJI} as liquidity to a scam pool, which was subsequently wash traded by a small but tightly-connected \textit{network} of seven wash-trader addresses, the furthest five hops away from \textbf{9aa6} (see Fig.~\ref{fig:wt_9aa6}). Note that as a public CEX, Kucoin should not be considered when counting hops between addresses. Other than wash trading for \texttt{GJI} and transferring among themselves and the scammer address, these addresses didn't do anything else, which means their sole purpose was wash trading. 
    
    The \textit{wash-trading network} for \texttt{GJI} worked as follows. First, \textbf{6b85} received 1 ETH from Kucoin, then swapped 0.2 ETH for \text{GJI} a minute after the liquidity was added by \textbf{9aa6}. It also transferred 0.79 ETH to the next wash trader \textbf{8f33}, which bought \texttt{GJI} twice using a total of 0.2 ETH. \textbf{8f33} then transferred 0.56 ETH to \textbf{381c}, which swapped 0.2 ETH for \texttt{GJI}, before transferring 0.35 ETH to \textbf{b7f1}. The address \textbf{b7f1} also received 0.85 ETH from another wash trader \textbf{884f} of \texttt{GJI}, which was funded by yet another wash trader \textbf{fa6a}, which was originally funded from Kucoin. \textbf{bf71} bought \texttt{GJI} with 1 ETH, and transferred 0.19 ETH to \textbf{d187}, which, after swapping 0.15 ETH for \texttt{GJI}, transferred the leftover fund of 0.025 ETH to the scammer address \textbf{9aa6}. The address \textbf{fa6a} also transferred 0.4 ETH to \textbf{9aa6}. Finally, \textbf{9aa6} removed its liquidity with 9.29 ETH from the scam pool and transferred 9.77 ETH to a \textit{depositor} address \textbf{7298}, which received funds from more than a hundred of (similar) scammer and transferrer addresses and deposited to Kucoin also more than a hundred times.
\end{example}

\vspace{-5pt}
\subsection{Node Labelling}
\label{subsec:labeling}

We first define the main roles performed by addresses in a scam network, which are associated with three key operations of such networks: \textit{scamming} (Rug Pull), \textit{wash trading} (for scam pools), and (scam) \textit{money laundering}. All addresses must be in the network.
\begin{itemize}
    \item $\ts$: \textit{scammer} address - associated with a one-day scam Rug-Pull pool (see Definition~\ref{def:scammer_address}).
    \item $\tc$: \textit{coordinator} address - was the \textit{largest funder} of at least five scammer addresses, and at least 50\% of its EOA neighbors must be scammer addresses. The largest funder of an address A is one of the in-neighbours that transferred the maximum amount of high-value token (ETH/BNB) to A. 
    \item $\twt$: \textit{wash-trader} address - reachable from a scammer address within the network after a series of native-token transfers (the directions of such transfers are not important), and bought at least one one-day scam token from a scam pool within that network. %, and at least 50\% of the token swaps (buy and sell) were from scam pools.
    \item $\td$/$\tw$: \textit{depositor/withdrawer} address - sent fund to/withdraw from CEXs, mixers, or bridges.
    \item $\tt$: \textit{transfer} address - received and forwarded fund only and performed no other activities nor retained the fund received. More specifically, the address must only interact with EOAs and moreover, the total out transfers (including transaction fees) must be at least 99\% of the total in transfers (not including transaction fees).
    \item $\tb$: \textit{boundary} address - is not scammer or coordinator, has at least ten token swap-ins and more than 50\% of the swap-ins were with non-scam pools.
\end{itemize}

Note that to identify wash trader addresses of a scam network, we only examine native-token transfers and believe that most wash-trader addresses should be discovered that way.
We argue that it is unlikely that a wash trader address receives the scam tokens from another address within the scam network to perform wash trading. First, it may raise immediate suspicion as a normal investor would never own a (scam) token without buying first. Second, the wash traders should buy the scam tokens to increase their prices to make them more attractive, and not to sell right away to reduce the prices. And third, it is a less straightforward way to manage the wash trading network and potentially more expensive compared to direct native-token transfer. Thus, most rational scammers wouldn't transfer scam tokens to their wash-trader addresses. In fact, we examined all scam pools, and found that only 3,368 pools (2.1\%) on Uniswap and 11,059 pools (2.3\%) on Pancakeswap have investor addresses that received scam tokens from other addresses. 

\subsection{Generating Scam Network}
\label{subsec:scam_network}

%\textbf{Key Tasks}: to develop a cluster-building algorithm AND a Labelling algorithm that can label all nodes in the entire cluster.

%A key step in our exploration of serial scammer operations is to generate \textit{scam clusters}, which contain not only scammers themselves but also other associated addresses that serve different and important roles in scam operations including wash traders, transferrers, depositors, withdrawers, and coordinators. 
%There are many existing clustering algorithms, but 
A natural choice to explore the scam network is to use a Breadth-First Search (BFS) as it can capture essential chain activities including fund transfer/deposit/withdraw and DEX-specific activities such as token swap and transfer, token/pool creation, liquidity providing and removal. Although theoretically straightforward, implementing BFS to identify scam networks on Ethereum and BSC chains is remarkably difficult. There are four main challenges, but the major one is to avoid network explosion and to prevent the BFS from including public or benign addresses.  %are discussed below. 
We will use accounts, addresses, and nodes interchangeably.
%\begin{itemize}

%There are many challenges when constructing a scam network. % using BFS. 

\textbf{Challenge 1.}  \textit{The background graph is non-homogeneous}. Nodes in the transaction graph include externally owned accounts (EOAs) and contract accounts. Contract accounts could be mixers, centralized exchanges, bridges, routers, MEV/trading bots, token deployers, each type behaves differently and requires a different treatment.

\textbf{Challenge 2.} \textit{Identifying an edge between two accounts is challenging and requires domain knowledge}. For example, an edge exists between two EOAs A and B not only when A transferred fund to B directly, but also when A and B jointly created a scam. On the other hand, if A sends a 0-value transaction to B (a message-carrying transaction), that shouldn't be counted as an edge\footnote{Our algorithm initially hit a non-scam account that sent a transaction to a known scam account. It turns out that the transaction only carries a sneering message.}.

\textbf{Challenge 3.} \textit{The background graph is huge and unavailable}. By contrast to the standard setting in the traditional graph theory, due to its sheer size, the complete underlying transaction graph among all accounts on Ethereum or BSC (or any other established chain) is not available and impossible to build\footnote{For example, the Ethereum chain generates more than a million new transactions everyday (see \url{https://ycharts.com/indicators/ethereum_transactions_per_day}).}. 

\textbf{Challenge 4.} \textit{Network explosion}. Starting from a (seed) scammer node, the standard BFS can quickly expand to include an unmanageable number of nodes in its queue, especially when the node in consideration starts transacting with public accounts (CEXs, mixers, bridges), or trading from non-scam pools. Thus, we must be able to recognize the boundary nodes associated with non-scam activities, to prevent BFS from including normal nodes unrelated to the scammers, which may have been active for years with thousands to hundreds of thousands of transactions and would fairly quickly overwhelm BFS's memory and pollute the real scam network. 

Apart from the above, irrelevant transactions generated by \textit{phishing attacks} (e.g. address-poisoning attacks) can also confuse the BFS, making it jump out of the scam network by mistake.

\textbf{Address-poisoning attacks can break BFS.} Phishing attackers target everyone, including addresses in Rug-Pull scam networks. Dusting attack or dust value transfer  (see, e.g.~\cite{Binance_20M_phishing,GuanLi_CCS_2024, Chen_etal_NDSS2025}) is a common type of address poisoning attack in which an address that looks very similar to the victim's out-neighbor (see Fig.~\ref{fig:address_poisoning}) transferred a tiny amount to the victim address. 
A naive BFS may expand from a node in the scam network that was the target of a dusting attack to visit the attacker address, then to its coordinator address (the one who coordinates all the phishing attacks) and then unknowingly to \textit{all} of its (benign) victim addresses in the network. We implemented a simple phishing detector that excludes a phishing address from the set of valid neighbors for each address in BFS, which detected dozens to hundreds of such addresses for every network we tried.

%\vspace{-5pt}
\begin{figure}[htb]
    \centering
    \includegraphics[scale=0.39]{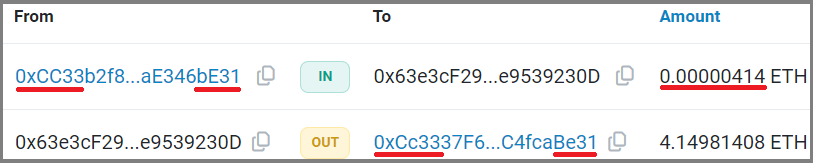}
    \caption{\textmd{The phishing address has the same (lower-cased) first and last four digits as an out-neighbor of the victim address. It transferred a tiny amount to the address \textbf{230d}, hoping that one day the address owner will mistakenly transfer fund to it instead.}}
    \label{fig:address_poisoning}
    %\vspace{-10pt}
\end{figure}

\textbf{Existing approaches.} To circumvent the network explosion problem (Challenge 4) when building the transaction subgraph (starting from one or more seed nodes), one must identify/define a boundary, or \textit{terminal} nodes, at which the BFS stops expanding further. The simplest way is to set the boundary nodes to be just \textit{1-hop neighbors} of the scammer nodes (equivalently, from the scam pools). For example, Morgia~\et~\cite{Morgia_etal_ICDCS_2023} investigated weakly connected components of the subgraph generated by the 1-hop neighbours of the NFT scams to identify wash-trading groups. The issue with this approach is that it treats each scam as an isolated one and fails to recognize the connection among multiple scams (the main topic of our work). Another simple way is to explicitly set the \textit{maximum number of hops} the BFS can reach, e.g. ten hops as in Yan~\et~\cite[Algo.~1]{Yan_etal_Cybersecurity_2023}. 
However, this artificial threshold (ten hops) would lead to an inaccurate picture of a true scam clusters. For instance, we discovered in our work a number of very long scam chains with lengths up to a few hundreds (see Section~\ref{subsec:chain}). Limiting the BFS to a fixed, small number of hops would also lead to inaccurate statistics on scam clusters and their true profits.

\textbf{Our modified BFS} receives as input the scammer list $S$, the scam pool list $P$, and scam token list $T$ identified in the data collection phase. It then iterates over $S$, starts from each unvisited address as a seed, retrieves the transaction history of the current address in consideration, and identifies and places its \textit{valid} neighbors into BFS's queue for future processing.
Valid neighbors of the current address $v$ include unvisited/unqueued EOAs that had a \textit{non-zero} native-token (ETH/BNB) transfer with $v$ or were scammer addresses behind the same scam pool as $v$ (if $v$ is a scammer). Note that the BFS must identify address-poisoning transactions to avoid including the attacker/other victim addresses into the scam network. %We implemented an \textit{address-poisoning-attack detector} (see Appendix~\ref{app:scam_network}) to prevent the BFS to branch out to benign victim addresses.
%, or were \textit{scammer} associated with a \textit{scam pool} that $v$ traded with. %, or  traded with the scam pool associated with $v$ given that $v$ is a scammer. 

%\end{itemize}
% \begin{itemize}
%     \item unvisited/unqueued EOAs that had a \textit{non-zero-value transaction} (in ETH/BNB) with $v$, and 
%     \item unvisited/unqueued \textit{scammer} EOAs associated with a \textit{scam pool} that the $v$ traded with, and
%     \item unvisited/unqueued EOAs %or creators of contracts (bots) 
%     that traded with the scam pool associated with $v$ given that $v$ is a scammer. 
% \end{itemize}
\textbf{To address network explosion} (Challenge 4), instead of setting the maximum number of hops like in~\cite{Morgia_etal_ICDCS_2023,Yan_etal_Cybersecurity_2023}, we allow BFS to expand arbitrarily far, but identifying \textit{terminal} nodes at which BFS stops expanding. The set of terminate nodes is described as follows.
%\begin{itemize} 
\textit{Public terminal nodes} are publicly label nodes such as mixers, CEXs, bridges, MEVs, contract deployers, and DEX routers. Similar to most work in the literature, a list of such addresses can be collected from well-known sites such as Etherscan's WordCloud and Dune, and also manually added on the fly. \textit{Normal trading (boundary) nodes} are non-scammer/coordinator addresses that had at least 10 swap-ins and at least 50\% of them were with non-scam exchange pool. If such an address is reached, BFS won't add its neighbors to the queue.
    %Exceptions include known scammers that have direct transactions with this node or scammers associated to a known scam exchange pool that this node traded with. \textcolor{red}{What else?}
\textit{Big nodes} are defined according to two limits $\ell=500$ and $L=1000$. Addresses that have more than $L$ transaction will be ignored. Addresses with more than $\ell$ but at most $L$ transactions will be ignored except for scammers and coordinators.

% \subsection{Network-Aware Scam Profit}
% \label{subsec:network_aware_scam_profit}

\begin{figure*}[htb]
    \centering
    \includegraphics[scale=0.45]{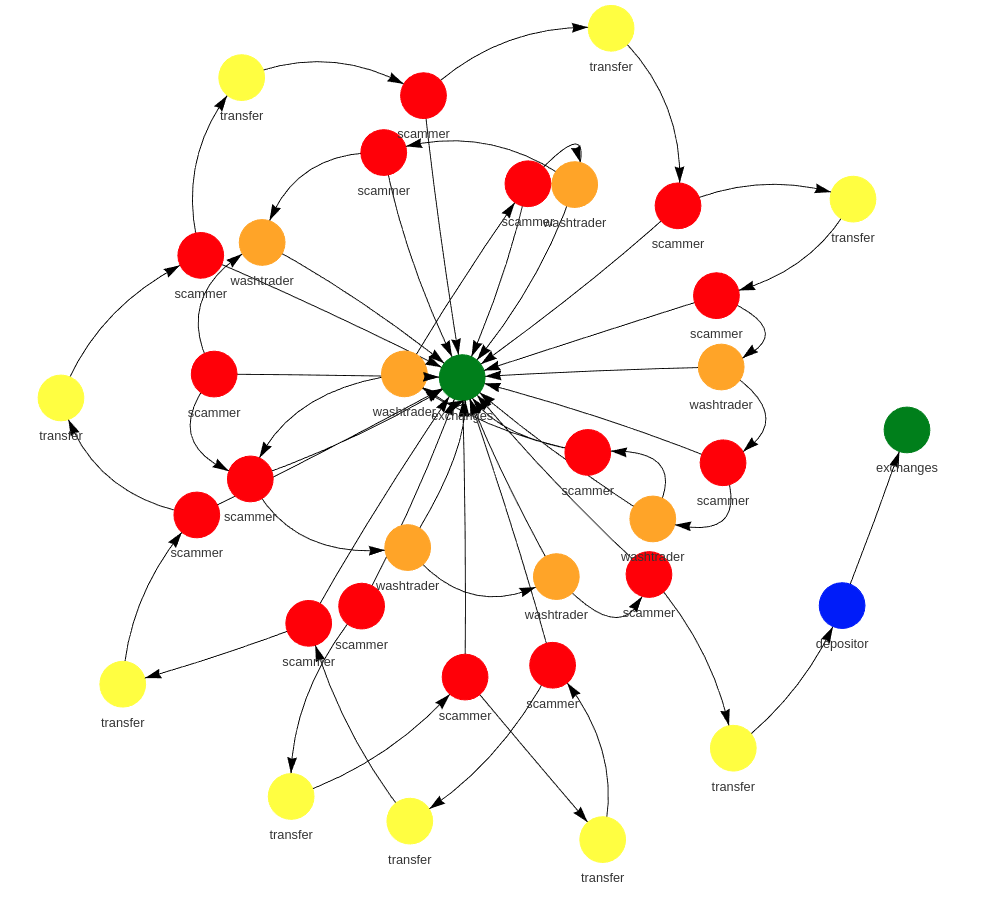}
    \caption{\textmd{The scam network on Uniswap expanded from the cluster 85. Color code: red - scammer, orange - wash trader, yellow - transfer, blue - depositor, purple - coordinator, green - (public) exchange. The arrows represent the native-token transfers.}}
    \label{fig:Uniswap_sample_network}
    %\vspace{-10pt}
\end{figure*}

\begin{figure*}[htb]
    \centering
    \includegraphics[scale=0.45]{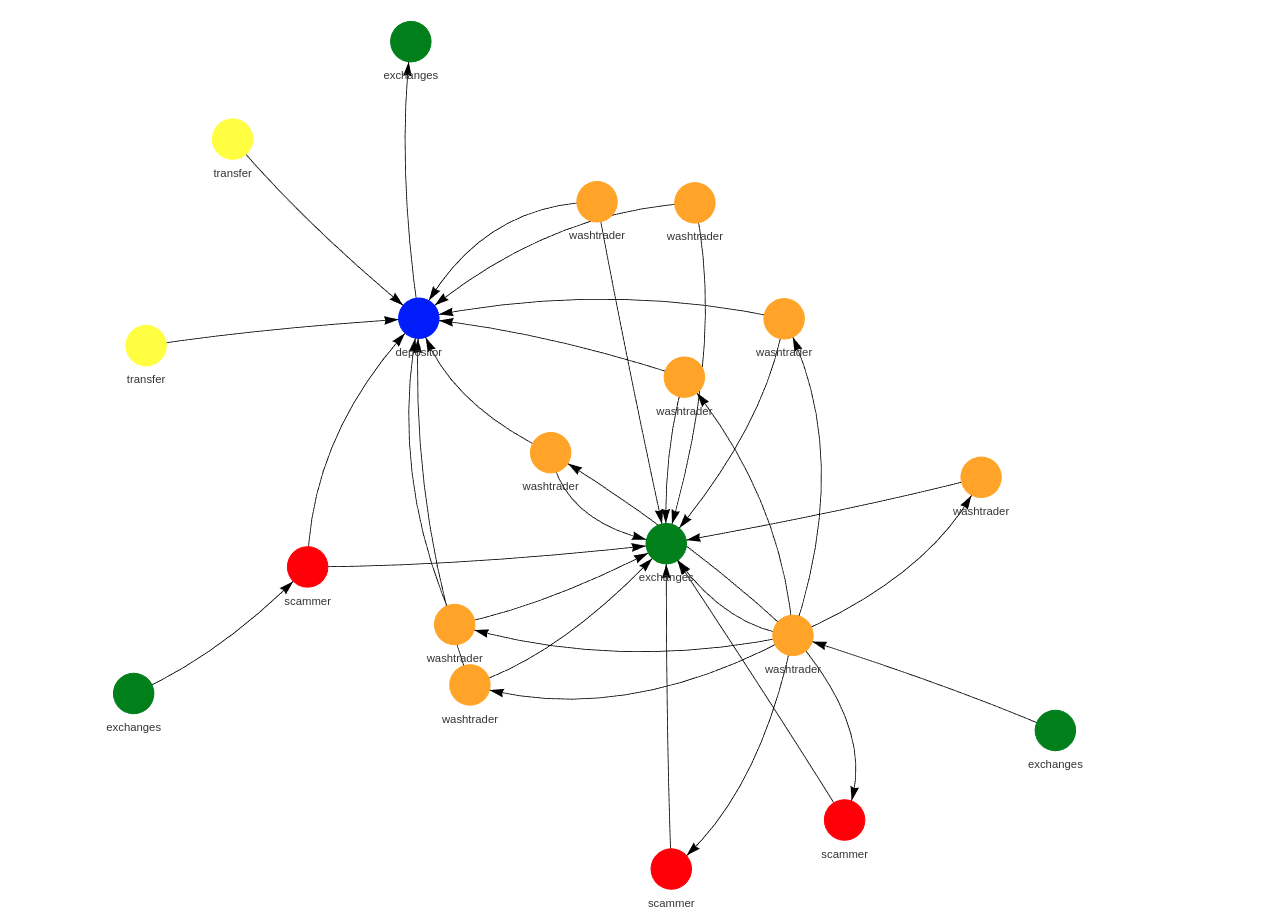}
    \caption{\textmd{The scam network on Pancakeswap expanded from the cluster 8504.}}
    \label{fig:Pancakeswap_sample_network}
    %\vspace{-10pt}
\end{figure*}

\textbf{Small scam networks.} We ran our BFS algorithm (see App.~\ref{subsec:scam_network}) starting from the scam clusters as seed, limited the network size to 200 nodes, and obtained 134 Uniswap and 35 Pancakeswap small but complete sample scam networks. These networks contain in total 2,751 Uniswap scam pools and 1,014 Pancakeswap scam pools. Moreover, the Uniswap networks contain 6,252 scammer addresses and 2,680 non-scammer addresses. The Pancakeswap networks contain 1,311 scammer addresses and 897 non-scammer addresses. Lastly, 39 out of 134 Uniswap networks have wash traders, for which the averaged total network scam profit produced by the naive formula is 58\% higher than that from the \textit{network-aware} formula. There are 17 out of 35 Pancakeswap networks with wash traders in them, for which the averaged total network scam profit generated by the naive formula is 115\% higher than what is obtained using the network-aware formula. Two sample scam networks with nodes labeled by their roles are given in Fig.~\ref{fig:Uniswap_sample_network} and Fig.~\ref{fig:Pancakeswap_sample_network}.

\end{document}